\newtheorem{theorem}{Theorem}
\newtheorem{lemma}[theorem]{Lemma}
\newtheorem{corollary}[theorem]{Corollary}
\newtheorem{definition}[theorem]{Definition}
\newtheorem{example}[theorem]{Example}
\newtheorem{remark}[theorem]{Remark}
\newtheorem{conjecture}[theorem]{Conjecture}
\newcommand{\bF}{ {\mathbb F}}
\newcommand{\gf}{{\mathrm{GF}}}
\begin{document}

\title{Several classes of PcN power functions over finite fields \footnote{}}

\author{Xiaoqiang Wang,\,\,\, Dabin Zheng*{\thanks{*Corresponding author. This work was partially supported by the National Natural Science Foundation of China under Grant Numbers 12001175 and 11971156.
\newline \indent ~~Xiaoqiang Wang and Dabin Zheng are with the Hubei Key Laboratory of Applied Mathematics, Faculty of Mathematics and Statistics, Hubei University, Wuhan 430062, China (E-mail: waxiqq@163.com, dzheng@hubu.edu.cn)
}}}

\date{
}

\maketitle

\begin{abstract}
Recently, a new concept called multiplicative differential cryptanalysis and the corresponding $c$-differential uniformity were introduced by Ellingsen et al.~\cite{Ellingsen2020}, and then some low differential uniformity functions were constructed. In this paper, we further study the constructions of perfect $c$-nonlinear (PcN) power functions. First, we give a necessary and sufficient condition for the Gold function to be PcN and a conjecture on all power functions to be PcN over $\gf(2^m)$. Second, several classes of PcN power functions are obtained over finite fields of odd characteristic for $c=-1$ and our theorems generalize some results in~\cite{Bartoli,Hasan,Zha2020}. Finally, the $c$-differential spectrum of a class of almost perfect $c$-nonlinear (APcN) power functions is determined.
\end{abstract}
\vskip 6pt
\noindent {\it Keywords:} {\small C-differential uniformity, perfect c-nonlinear function, almost perfect c-nonlinear function, differential spectrum}

\noindent {\it \small 2010 Mathematics Subject Classification:} {\small 94A60, 11T71.}
\vskip 35pt

\leftskip 0.0in
\rightskip 0.0in

\section{Introduction}

Differential cryptanalysis proposed by Biham and Shamir in literature [5] is a powerful analysis method to attack block cipher, which has attracted extensive attention of researchers.  The basic idea of differential cryptanalysis is to recover the key values with the greatest possibility by analyzing the influence of a specific plaintext difference on the ciphertext difference. The security of cryptographic functions against differential attacks has been extensively studied in the past 30 years. In order to measure the ability of a given function to resist differential attack, Nyberg introduced the concept of differential uniformity in~\cite{Nyberg1994} :
Let $\gf(p^m)$ denote the finite field with $q$ elements. A function $f$ from $\gf(p^m)$  to itself
is called differentially $\Delta_f$-uniform, where
\begin{equation*}
\Delta_f=\underset{0\neq a \in \gf(p^m)}{\rm max}\underset{b \in \gf(p^m)}{\rm max}\,|\,\{ x \in \gf(p^m) \, | \, f(x+a)-f(x)=b \} |.
\end{equation*}
The lower the quantity of $\Delta_f$, the stronger the ability of the function $f(x)$ to resist differential attack. If $\Delta_f= 1$ and $\Delta_f= 2$, then $f$ is called a perfect nonlinear (PN) function and an almost perfect nonlinear (APN) function, respectively. In the past many years, a lot of progress on the constructions of PN and APN functions have been made. The reader is referred to~\cite{Coulter97,Dembowski68,Ding06,Dobbertin99,Dobbertin991,zha091,zha09} and the references therein for information.

Recently, a new type of differential was proposed in \cite{Borisov2002}. The authors extended the type of differential cryptanalysis by using modular multiplication as a primitive operation. For a vectorial
Boolean function $f$, they argued that one should look at new type of differential $(f(cx), f(x))$ and not only $(f(x + a), f(x))$. Based on this work, Ellingsen et al. in \cite{Ellingsen2020} defined a new concept called {\it multiplicative differential}, and proposed the corresponding concept of $c$-differential uniformity as follows.
\begin{definition}\label{definition}
Let $\gf(p^m)$ denote the finite field with $p^m$ elements and $a, c\in \gf(p^m)$. For a function $F(x)$ from $\gf(p^m)$ to itself, the (multiplicative) $c$-derivative of $F(x)$ with respect to $a$ is defined as
\begin{equation*}
{}_cD_aF(x)=F(x+a)-cF(x), \,\,{\rm for}\,\, {\rm all}\,\,\, x.
\end{equation*}
For $a, b\in \gf(p^m)$, let ${}_c\Delta_F(a,b)=\# \{x\in \gf(p^m):F(x+a)-cF(x)=b \}$. We call ${}_c\Delta_F=max$ $ \{ {}_c\Delta_F(a,b):a,b\in \gf(p^m), $ and $a \neq 0$ if $c = 1\}$ the $c$-differential uniformity of $F(x)$.
\end{definition}

If ${}_c\Delta_F=\delta$, then we say that $F$ is differentially $(c, \delta)$-uniform. If $\delta = 1$ and $\delta=2$, then $F$ is called a perfect $c$-nonlinear (PcN) function and an almost perfect c-nonlinear (APcN) function, respectively. If $c = 1$, then the $c$-differential uniformity becomes the usual differential uniformity, and PcN and APcN functions become PN and APN functions, respectively. It is known that APN functions over finite fields of even characteristic have the lowest differential uniformity. However, for the $c$-differential uniformity, there exist PcN functions.

Since the power functions with low differential uniformity are an ideal choice for S-box design, these functions have attracted a lot of attention, especially the PcN and APcN power functions. The reader is referred to ~\cite{Bartoli,Borisov2002,Mesnager2020,Hasan,Yanar,Yan2020,Zha2020} and the references therein for information. For convenience, we list the known PcN and APcN power functions in Table~1. Among other results,
the references~\cite{Bartoliar,Wuar} also studied PcN and APcN multinomials. Table~1 shows that there are very few results on PcN power functions. For the case over finite fields with even characteristic, except for some very special cases, the Gold function is the only known PcN power function. For the case over finite fields with odd characteristic, most known PcN monomials $x^d$ are either over finite fields $\gf(3^m), \gf(5^m)$ for any positive integer $m$, or over small extensions of any odd prime field $\gf(p)$. These exponents $d$ can be seen as special solutions of $d({p^k+1})\equiv2 \pmod {p^m-1}$.

In this paper, our main objective is to construct some infinite classes of PcN power functions.
First, we give a necessary and sufficient condition for the Gold function to be PcN and  a conjecture of necessity and sufficiency conditions for all power functions to be PcN over finite fields with even characteristic. Second,  several classes of PcN power functions $x^d$ over $\gf(p^m)$ with $c=-1$ are proposed, where $p$ is an odd prime and $d$ satisfies $d({p^k+1})\equiv2 \pmod {p^m-1}$. Some known PcN power functions in~\cite{Bartoli,Hasan,Zha2020} are some special cases of our results. Finally, the $c$-differential spectrum of a class of APcN power functions is given.

The rest of this paper is organized as follows. Section \ref{sec-2} documents some preliminaries. Section \ref{sec-3} gives the necessity and sufficiency for the Gold function being PcN and a conjecture for all power functions being PcN over finite fields with even characteristic. Section~\ref{sec-4} obtains some PcN power functions over finite fields with odd characteristic. Moreover, the $c$-differential spectrum of a class of APcN power functions
is given. Section~\ref{sec:concluding} concludes this paper.

\begin{table}[h]
{\caption{\rm PcN and APcN Power functions $F(x)=x^d$ over $\gf(p^m)$ with $c\neq 1$}\label{table11}
\begin{center}
\vspace{-7mm}
\begin{tabular}{|c|c|c|c|c|}\hline
$p$     &$d$                                  &condition                                                                       &${}_c\Delta_F$  & Refs.\\\hline
any    &$2$                                  &$c \neq 1$                                                                  &2      &\cite{Ellingsen2020}    \\\hline
any    &$p^m-2$                              &$c=0$                                                                       &1         &\cite{Ellingsen2020}    \\\hline
2       &$2^m-2$                              &$c \neq 0$, ${\rm Tr}^m_1(c)={\rm Tr}^m_1(c^{-1})=1$                                        &2          &\cite{Ellingsen2020}    \\\hline
odd  &$p^m-2$                              &$\begin{array}{c}c=4,4^{-1}\,\, or\\
\chi(c^2-4c)=\chi(1-4c)=-1\end{array}$                            &2         &\cite{Ellingsen2020}    \\\hline
3       &$\frac{3^k+1}{2}$                    &$c=-1$, $\frac{n}{\gcd(k,m)}=1$                                             &1          &\cite{Ellingsen2020}    \\\hline
odd  &$\frac{p^2+1}{2}$                    &$c=-1$, $m$ odd                                                           &1       & \cite{Bartoli}     \\\hline
odd  &$p^2-p+1$                            &$c=-1$, $m=3$                                                               &1   & \cite{Bartoli}       \\\hline
odd  &$\begin{array}{c}p^4+(p-2)p^2\\
+(p-1)p+1\end{array}$                            &$c=-1$, $m=5$                                                               &1          & \cite{Hasan}    \\\hline
odd  &$(p^5+1)/(p+1)$                            &$c=-1$, $m=5$                                                               &1      & \cite{Hasan}    \\\hline
odd  &$\begin{array}{c}(p-1)p^6+p^5+(p-2)p^3\\
+(p-1)p^2+p\end{array}$                            &$c=-1$, $m=7$                                                               &1          & \cite{Hasan}    \\\hline
odd  &$\begin{array}{c}(p-2)p^6+(p-2)p^5+\\
+(p-1)p^4+p^3+p^2+p\end{array}$                            &$c=-1$, $m=7$                                                               &1          & \cite{Hasan}    \\\hline
odd  &$(p^7+1)/(p+1)$                            &$c=-1$, $m=7$                                                               &1          & \cite{Hasan}    \\\hline
3       &$\frac{3^n+3}{2}$                              &$c=-1$, $m$ even                                                  &2        &\cite{Mesnager2020}  \\\hline
3    &$3^n-3$                              &$c=0$                                      &$2$ &\cite{Mesnager2020}  \\\hline
odd  &$\frac{p^k+1}{2}$                    &$v_2(m)\leq v_2(k)+1$, $c=-1$                                         &1       &\cite{Mesnager2020}  \\\hline
odd  &$p^k+1$                              &$v_2(m)\leq v_2(k)$, $1 \neq c \in  \mathbb{F}_{p^{\gcd(m,k)}}$                                   &2          &\cite{Mesnager2020}  \\\hline
2  &$2^k+1$                              &$v_2(m)\leq v_2(k)$,  $k\geq 2$, $1 \neq c \in  \mathbb{F}_{2^{\gcd(m,k)}}$                                   &1         &\cite{Mesnager2020}  \\\hline
3       &$\frac{3^k+1}{2}$                    &$k$ odd, $\gcd(k,m)=1$, $c=-1$                                            &2        &\cite{Yan2020}   \\\hline
3       & $\begin{array}{c}\frac{3^k+1}{2}d\equiv \frac{3^m+1}{2} \pmod {3^m-1}\\
                          \text{$d$ odd}  \end{array}$
                          & $\begin{array}{c}\,\, \text{$k$ and $m$ are odd } \\
\text{such that}\,\, {\rm\gcd}(m,k)=1\end{array}$   &$1$    &\cite{Zha2020}      \\\hline
5          & $\begin{array}{c}\frac{5^k+1}{2}d\equiv \frac{5^m+1}{2} \pmod {5^m-1}\\
                          \text{$d$ odd}  \end{array}$                         &$\begin{array}{c}\,\, \text{$k$ and $m$ are positive} \\ \text{ integer
such that}\,\, {\rm\gcd}(2m,k)=1\end{array}$                &$1$    &\cite{Zha2020}    \\\hline
\end{tabular}
\end{center}}
\end{table}

\section{Notation and Preliminaries}\label{sec-2}
Throughout this paper, we always let $m,k,d$ be positive integers. Let $v_2(\cdot)$ be the 2-adic order function and $v_2(0)=\infty$. Let $\gf(p^m)$ denote the finite field with $p^m$ elements, and $\gf(p^m)^*$ the set of non-zero elements in $\gf(p^m)$. Let $\eta$ be the quadratic character of $\gf(p^m)^*$, i.e., $\eta(x)=x^{\frac{p^m-1}{2}}$ for $x\in \gf(p^m)^*$. Then $\eta(x)=1$ if $x$ is a square element in $\gf(p^m)^*$ and  $\eta(x)=-1$  if $x$ is a  non-square element in $\gf(p^m)^*$.

Let $F(x)$ be a power function over $\gf(p^m)$. It is easy to check that ${}_c\Delta_F(a,b)={}_c\Delta_F(1,b)$ for $a\in \gf(p^m)^*$ and  ${}_c\Delta_F(a,b)={\rm gcd}(d,p^m-1)$ for $a=0$ and $c\neq 1$. Hence, the following result on the $c$-differential uniformity of power functions is easily obtained, which was first given in~\cite{Yan2020}.

\begin{lemma}\cite[Lemma 1]{Yan2020}\label{lemma1}
Let $F(x)=x^d$ be a power function over $\gf(p^m)$. Then
\begin{equation*}
{}_c\Delta_F={\rm max}\left\{ \{{}_c\Delta_F(1,b):b\in\gf(p^m)\}\cup\{{\rm gcd}(d,p^m-1)\} \right\}.
\end{equation*}
\end{lemma}

PcN functions have the lowest $c$-differential uniformity and have been widely studied. The following result on PcN power functions is well-known and has been analyzed in \cite{Mesnager2020,Riera991,Hasan,Yan2020}.

\begin{lemma}\cite[Theorem 6]{Mesnager2020}\label{lem:pk+1div2}
Let $p$ be an odd prime and $m, k$ be integers with $1\leq k< m$ and $m\geq 3$. Let $F(x)=x^{\frac{p^k+1}{2}}\in \gf(p^m)[x]$. If $c=-1$, then $F$ is PcN if and only if $v_2(m)\leq v_2(k)+1$. Otherwise,  ${}_{-1}\Delta_{F}=\frac{p^{\gcd(k,m)}+1}{2}$.
\end{lemma}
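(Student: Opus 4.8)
The plan is to reduce everything, via Lemma~\ref{lemma1}, to the single quantity $N(b):={}_{-1}\Delta_F(1,b)=\#\{x\in\gf(p^m):(x+1)^d+x^d=b\}$ together with $\gcd(d,p^m-1)$, where $d=\frac{p^k+1}{2}$. After isolating the two endpoints $x=0$ and $x=-1$ (which contribute only for special $b$), the generic $x$ is handled through one algebraic device: with $S=(x+1)^d+x^d$ and $\Delta=(x+1)^d-x^d$, the identity $2d=p^k+1$ gives
\[
S\Delta=(x+1)^{p^k+1}-x^{p^k+1}=x^{p^k}+x+1,
\]
since $(x+1)^{p^k}=x^{p^k}+1$ in characteristic $p$. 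Thus the product of the sum and the difference collapses to an additive expression in $x$, and this is the lever for all the counting.

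For $b=0$ the reduction is immediate: for $x\neq 0,-1$ the substitution $t=(x+1)/x=1+x^{-1}$ is a bijection onto $\gf(p^m)\setminus\{0,1\}$ and turns the equation into $t^d=-1$, whose number of solutions is $\gcd(d,p^m-1)$ when $-1$ lies in the image of the $d$-th power map and $0$ otherwise — a condition I would read off from the cyclic structure of $\gf(p^m)^*$. For $b\neq 0$ the identity yields $\Delta=(x^{p^k}+x+1)/b$ and hence $x^d=(b-\Delta)/2$; squaring and using $(x^d)^2=x^{p^k+1}=x\cdot x^{p^k}$ produces the single relation $4b^2\,x\,x^{p^k}=(b^2-x-x^{p^k}-1)^2$ coupling $x$ to its Frobenius image $x^{p^k}$. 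Writing $w=x+x^{p^k}$ and $v=x^{p^k+1}$, so that $x$ is a root of $Z^2-wZ+v$, reduces $N(b)$ to counting how many $x$ realize each admissible pair $(w,v)$; using the quadratic character $\eta$ and the same image-of-$d$-th-power analysis as in the $b=0$ case, I expect to show that $\max_b N(b)$ is controlled by the same gcd data, namely $\gcd(d,p^m-1)$ and the companion quantity $\frac{p^{\gcd(k,m)}+1}{2}$.

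The last step converts these gcd/character conditions into the stated arithmetic condition on $v_2(m)$ and $v_2(k)$. For this I would use $\gcd(p^k+1,p^m-1)=p^{\gcd(k,m)}+1$ when $m/\gcd(k,m)$ is even and $2$ when it is odd, and then pass from $p^k+1$ to $\frac{p^k+1}{2}$ by Lifting-the-Exponent bookkeeping comparing $v_2(p^k+1)$, $v_2(p^m-1)$ and $v_2(p^{\gcd(k,m)}+1)$. This pins down exactly when every relevant solution count drops to $1$ (the PcN case) and, otherwise, forces the maximum to equal $\frac{p^{\gcd(k,m)}+1}{2}$, giving the claimed dichotomy.

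The main obstacle is the general-$b$ case. Unlike $b=0$, the identity couples $x$ with $x^{p^k}$, so the governing equation $4b^2\,x\,x^{p^k}=(b^2-x-x^{p^k}-1)^2$ is not an ordinary polynomial equation but a \emph{Frobenius-twisted} quadratic, and bounding its solution set uniformly in $b$ (showing it never exceeds the count at $b=0$) is the delicate point. Combined with the careful $2$-adic valuation bookkeeping needed to land precisely on the threshold separating PcN from $c$-differential uniformity $\frac{p^{\gcd(k,m)}+1}{2}$, this is where I expect the real work to lie.
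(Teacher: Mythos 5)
A preliminary remark: the paper itself contains no proof of this lemma --- it is quoted, with citation, from \cite[Theorem 6]{Mesnager2020} --- so there is no internal argument to compare yours against; the closest machinery in the paper is the square/non-square parametrization of Lemma~\ref{lem:ddf} and Theorem~\ref{Theorem1}, which is quite different from your squaring device. The decisive problem, however, is that the statement as printed is false, so no completion of your plan could succeed. Take $p=3$, $k=2$, $m=6$, so $d=\frac{3^2+1}{2}=5$ and $v_2(m)=1\le v_2(k)+1=2$; every hypothesis of the lemma holds. Let $i\in\gf(9)\subset\gf(3^6)$ satisfy $i^2=-1$. Then $(1+i)^2=-i$ and $(1+i)^4=-1$, hence $(1+i)^5=-(1+i)$, while $i^5=i$; therefore $x=-1$ and $x=i$ both satisfy $(x+1)^5+x^5=-1$, so ${}_{-1}\Delta_F\ge 2$ even though $\gcd(5,3^6-1)=1$. (Cheaper still: for $p\equiv 3\pmod 4$ and $k,m$ odd, e.g.\ $p=3$, $k=1$, $m=3$, the exponent $d=\frac{p^k+1}{2}$ is even, and then $x=0$ and $x=-1$ collide at $b=1$.) The correct threshold, and the only one consistent with the ``otherwise'' clause $\frac{p^{\gcd(k,m)}+1}{2}$ (which equals $2$ in the first cheap example above, matching the true uniformity), is $v_2(m)\le v_2(k)-1$, equivalently that $2m/\gcd(2m,k)$ be odd; the paper's transcription has $+1$ where the cited source has, in effect, $-1$. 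Notably, your own plan contains exactly the two probes that would have exposed this --- checking when the endpoints $x=0,-1$ collide, and the $2$-adic bookkeeping meant to ``land precisely on the threshold'' --- but neither was executed.

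Independently of the falsity of the target, the proposal is a plan rather than a proof, and its central device does not do the work you assign to it. The general-$b$ case, which you yourself flag as ``where the real work lies,'' is not reduced by the $(w,v)$ substitution: in $4b^2\,x\,x^{p^k}=(b^2-x-x^{p^k}-1)^2$, setting $w=x+x^{p^k}$ and $v=x\,x^{p^k}$ makes $v$ an explicit function of $w$, but deciding which pairs $(w,v)$ are realized by some $x\in\gf(p^m)$ with $x^{p^k}=w-x$ is precisely the original Frobenius-twisted equation again; the parametrization is circular and yields nothing beyond the trivial degree bound $2p^k$. Moreover, since the twisted equation was obtained by squaring $x^d=(b-\Delta)/2$, its solutions satisfy the original equation only up to a sign, so even an exact count for it would bound ${}_{-1}\Delta_F(1,b)$ rather than determine it, whereas PcN-ness requires exact counts for every $b$; the quadratic-character bookkeeping needed to resolve the signs is mentioned but never carried out. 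The arguments that actually work here (in \cite{Mesnager2020}, and in this paper's Lemma~\ref{lem:ddf}/Theorem~\ref{Theorem1} for a related family) replace the squaring step by writing $x$ and $x+1$ as $\pm$ squares and passing to the cyclic groups $\{\theta:\theta^{p^m+1}=1\}$ and $\gf(p^m)^*$, where the solution counts of the resulting quadratics in $\theta^{p^k\pm1}$ can be controlled by the gcd computations of Lemma~\ref{lemma5}.
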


Following the definition in \cite{Blondeau2010}, the $c$-differential spectrum of a function is given as follows.
\begin{definition}\label{definition1}
Let $F(x)=x^d$ be a function over $\gf(p^m)$. Denote by $\omega_i$ the number of output differences $b$ that occur $i$ times, that is,
$\omega_i=\#\{ b \in \gf(p^m)\, | \, {}_c\Delta_F(a,b)=i\}$
for each $0\leq i\leq {}_c\Delta_F$. The differential spectrum of $F$ is defined to be the set
$$\mathbb{S}=\{\omega_i\,|\, 0\leq i \leq {}_c\Delta_F(a,b) \,\, and \,\, \omega_i>0\}.$$
\end{definition}

The following lemma will be used to compute the $c$-differential spectrum of some APcN functions.

\begin{lemma}\cite[Theorem 5.6]{Bluher2020}\label{lemma4}
Let $g(x)=x^{p^k+1}-bx+b$ with $b \in \gf(p^m)^*$. Then the number of the solutions to $g(x)=0$ in $\gf(p^m)$ is $0$, $1$, $2$ or $p^{\gcd(m,k)}+1$. Let $N_i$ denote the number of $b\in \gf(p^m)^*$ such that $g(x)=0$ has exactly $i$ roots in $\gf(p^m)$. Let $Q=p^{\gcd(m,k)}$ and $h=[\gf(p^m): \gf(p^{\gcd(m,k)})]$, then the following statements hold.
\begin{enumerate}
\item[{\rm (1)}] If $h$ is even, then
$$N_0=\frac{Q^{h+1}-Q}{2(Q+1)},\,\,N_1=Q^{h-1},\,\,N_2=\frac{(Q-2)(Q^h-1)}{2(Q-1)},\,\, N_{Q+1}=\frac{Q^{h-1}-Q}{Q^2-1}.$$
\item[{\rm (2)}] If $p$ and $h$ are odd, then
$$N_0=\frac{Q^{h+1}-1}{2(Q+1)},\,\,N_1=Q^{h-1},\,\,N_2=\frac{Q^{h+1}-2Q^h-2Q+3}{2(Q-1)},\,\, N_{Q+1}=\frac{Q^{h-1}-Q}{Q^2-1}.$$
\item[{\rm (3)}] If $p$ is even and $h$ is odd, then
$$N_0=\frac{Q^{h+1}+Q}{2(Q+1)},\,\,N_1=Q^{h-1}-1,\,\,N_2=\frac{(Q-2)(Q^h-1)}{2(Q-1)},\,\, N_{Q+1}=\frac{Q^{h-1}-1}{Q^2-1}.$$
\end{enumerate}
\end{lemma}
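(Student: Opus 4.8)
The plan is to reduce the root count of $g$ to a fibre-counting problem for a rational map and then pin down $N_0,N_1,N_2,N_{Q+1}$ through a small system of moment identities. First I would note that $x=1$ is never a root, since $g(1)=1\neq 0$, and that $x=0$ forces $b=0$, which is excluded. Hence for $b\in\gf(p^m)^*$ every root lies in $\gf(p^m)\setminus\{0,1\}$, and on this set $g(x)=0$ is equivalent to
\[
b=\phi(x):=\frac{x^{p^k+1}}{x-1}.
\]
Thus $N_i$ is precisely the number of nonzero values attained by $\phi$ exactly $i$ times. This immediately yields the two bookkeeping identities
\[
N_0+N_1+N_2+N_{Q+1}=p^m-1,\qquad N_1+2N_2+(Q+1)N_{Q+1}=p^m-2,
\]
the second holding because the domain $\gf(p^m)\setminus\{0,1\}$ has $p^m-2$ points, each mapping to a unique nonzero $b$.

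The key structural step is to show that the only possible fibre sizes are $0,1,2$ and $Q+1$. I would run the classical difference argument: if $x_1\neq x_2$ are two roots, then subtracting $x_i^{p^k+1}=b(x_i-1)$ and using the characteristic-$p$ factorization $x_1^{p^k+1}-x_2^{p^k+1}=x_1^{p^k}(x_1-x_2)+x_2(x_1-x_2)^{p^k}$ gives $b=x_1^{p^k}+x_2(x_1-x_2)^{p^k-1}$. Comparing the analogous expression coming from a third root shows that the root set is stable under an $\mathbb{F}_Q$-linear structure, so once it contains three points it must fill out a full $\PG(1,Q)$-subline and therefore have exactly $Q+1=p^{\gcd(m,k)}+1$ elements. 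Equivalently, the scaling $x\mapsto b^{1/p^k}x$ transforms the family $x^{p^k+1}-bx+b$ into $y^{p^k+1}-y+b^{-1/p^k}$ as $b$ runs over $\gf(p^m)^*$, which identifies our problem with the standard classification of $y^{q+1}-y+a$ and confirms the same fibre dichotomy.

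With the admissible fibre sizes fixed, the two identities above leave one free parameter, so I would obtain a third relation by computing the fully split count $N_{Q+1}$ directly. Geometrically the $(Q+1)$-point fibres correspond to $\PG(1,Q)$-sublines of $\PG(1,p^m)$, so their number is governed by how $\gf(p^m)$ sits over $\gf(Q)$, i.e.\ by the index $h=[\gf(p^m):\gf(p^{\gcd(m,k)})]$. This is exactly where the trichotomy enters: the parity of $h$ together with the characteristic controls a norm/quadratic-residue condition on the admissible sublines, producing the three distinct formulas for $N_{Q+1}$ in parts (1)--(3). Once $N_{Q+1}$ is known, the linear system solves uniquely for $N_2$, then $N_1$, then $N_0$; as a consistency check one verifies that the first-moment sum reduces to $(Q-1)(Q^h-2)/(Q-1)=p^m-2$ in each case, using $Q^h=p^m$.

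The \emph{main obstacle} is twofold: first, rigorously establishing that every large fibre has size exactly $Q+1$ (the $\mathbb{F}_Q$-linear/geometric analysis of the root set), and second, extracting the three closed forms for $N_{Q+1}$. The latter requires a careful enumeration of the relevant sublines, whose count genuinely depends on whether $h$ is even or odd and on $\mathrm{char}\,\gf(p^m)$; the remaining quantities $N_0,N_1,N_2$ then follow mechanically from the moment system and so present no further difficulty.
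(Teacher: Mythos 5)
First, a point of reference: the paper does not prove this statement at all --- it is quoted verbatim as \cite[Theorem 5.6]{Bluher2020} (Bluher's 2004 paper on $x^{q+1}+ax+b$), so your proposal is being measured against a nontrivial external theorem, not against an argument in this paper. Your reduction itself is sound as far as it goes: $x=0,1$ are never roots, the fibre description via $\phi(x)=x^{p^k+1}/(x-1)$, the two moment identities, and the normalization $x\mapsto b^{1/p^k}x$ carrying the family to $y^{p^k+1}-y+b^{-1/p^k}$ are all correct. But the step you describe as "mechanical" contains a fatal counting error. You have four unknowns $N_0,N_1,N_2,N_{Q+1}$, two moment identities, and you propose to add one direct computation ($N_{Q+1}$); that is three constraints on four unknowns, and the system does \emph{not} close. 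Concretely, if $(N_0,N_1,N_2,N_{Q+1})$ satisfies both identities, then so does $(N_0+t,\,N_1-2t,\,N_2+t,\,N_{Q+1})$ for every integer $t$, since the perturbation contributes $t-2t+t=0$ to the zeroth moment and $-2t+2t=0$ to the first moment. Hence knowing $N_{Q+1}$ can never determine $N_0,N_1,N_2$, contrary to your claim that "the linear system solves uniquely for $N_2$, then $N_1$, then $N_0$." A fourth independent input is required --- in Bluher's actual proof both $N_1$ and $N_{Q+1}$ are computed directly, and only then do the moment identities yield $N_0$ and $N_2$. That $N_1$ carries information not recoverable from the rest is visible in the statement itself: it is the quantity that shifts from $Q^{h-1}$ to $Q^{h-1}-1$ between cases (1) and (3). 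Alternatively one could compute the second moment $\sum_b i_b(i_b-1)$, i.e.\ count pairs $x_1\neq x_2$ with $\phi(x_1)=\phi(x_2)$, but some such extra computation is unavoidable.

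Separately, the two items you flag as the "main obstacles" are exactly the substance of Bluher's theorem, and your proposal only asserts them. The difference identity $b=x_1^{p^k}+x_2(x_1-x_2)^{p^k-1}$ is correct, but the jump from there to "the root set is stable under an $\mathbb{F}_Q$-linear structure, so three roots force a full $\PG(1,Q)$-subline" is a heuristic, not an argument; note in particular that the relevant subfield is $\gf(p^{\gcd(m,k)})$ rather than $\gf(p^k)$, which itself requires proof since $\gf(p^k)$ need not lie inside $\gf(p^m)$. Likewise, the "norm/quadratic-residue condition on admissible sublines" that is supposed to produce the three closed forms for $N_{Q+1}$ (and, as explained above, you would also need the analogous closed form for $N_1$) is never specified, yet this is where the trichotomy in $h$ and $p$ actually gets established. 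As written, the proposal reduces the lemma to sub-claims whose difficulty is essentially that of the lemma itself, plus a linear-algebra step that provably cannot finish the job.
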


The following is a known result, which will be used throughout this paper.
\begin{lemma}\label{lemma5}
Let $p$ be a prime and $m,k$ be positive integers, then
\[\gcd(p^k+1,p^m-1)=\left\{ \begin{array}{lll}
           {\frac{2^{\gcd(2k,m)}-1}{2^{\gcd(k,m)}-1}}, & {\rm if}\, \,\, p = 2, \vspace{2mm}\\
           {2}, & {\rm if}\, \,\, v_2(m)\leq v_2(k), \vspace{2mm}\\
           {p^{\gcd(k,m)}+1}, & {\rm if}\, \,\, v_2(m)>v_2(k). \end{array}  \right.\]
\end{lemma}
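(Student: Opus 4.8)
The plan is to reduce everything to the single classical identity $\gcd(p^a-1,p^b-1)=p^{\gcd(a,b)}-1$ together with the elementary parity rule that, for $f\mid j$, one has $p^f+1\mid p^j-1$ when $j/f$ is even and $p^f+1\mid p^j+1$ when $j/f$ is odd. I would first dispose of the case $p=2$. Here $2^k-1$ and $2^k+1$ are consecutive odd integers, hence coprime, so the multiplicativity of the gcd over coprime factors gives
\[
\gcd(2^{2k}-1,2^m-1)=\gcd(2^k-1,2^m-1)\cdot\gcd(2^k+1,2^m-1).
\]
Since $2^{2k}-1=(2^k-1)(2^k+1)$, while the gcd on the left and the first factor on the right equal $2^{\gcd(2k,m)}-1$ and $2^{\gcd(k,m)}-1$ by the classical identity, dividing yields the stated formula $\frac{2^{\gcd(2k,m)}-1}{2^{\gcd(k,m)}-1}$ at once.

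For odd $p$, the whole argument hinges on comparing $\gcd(2k,m)$ with $\gcd(k,m)$, which is governed purely by $2$-adic valuations: writing $f=\gcd(k,m)$, one checks $\gcd(2k,m)=f$ when $v_2(m)\le v_2(k)$ and $\gcd(2k,m)=2f$ when $v_2(m)>v_2(k)$. Set $d=\gcd(p^k+1,p^m-1)$; since $p^k+1\mid p^{2k}-1$ we always have the upper bound $d\mid\gcd(p^{2k}-1,p^m-1)=p^{\gcd(2k,m)}-1$. In the case $v_2(m)\le v_2(k)$ this reads $d\mid p^f-1$, and as $f\mid k$ we also get $d\mid p^k-1$; combined with $d\mid p^k+1$ this forces $d\mid\gcd(p^k-1,p^k+1)=2$, while both $p^k+1$ and $p^m-1$ are even, so $d=2$, as claimed.

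The remaining case $v_2(m)>v_2(k)$ is where the real work lies. Here $k/f$ is odd and $m/f$ is even, so the parity rule gives $p^f+1\mid p^k+1$ and $p^f+1\mid p^m-1$, whence $p^f+1\mid d$; the upper bound becomes $d\mid p^{2f}-1=(p^f-1)(p^f+1)$. Writing $d=(p^f+1)s$, one sees $s\mid p^f-1$ and $s\mid p^k+1$, and since $\gcd(p^f-1,p^k+1)\mid\gcd(p^k-1,p^k+1)=2$ we get $s\in\{1,2\}$. The main obstacle is excluding $s=2$: this is exactly the point where a naive divisibility count leaves an ambiguous factor of two, and it must be resolved by a $2$-adic valuation computation. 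Because $k/f$ is odd, the lifting-the-exponent observation $v_2(p^k+1)=v_2\bigl((p^f)^{k/f}+1\bigr)=v_2(p^f+1)$ holds, so $v_2(d)\le v_2(p^k+1)=v_2(p^f+1)$; were $s=2$ we would get $v_2(d)=1+v_2(p^f+1)$, a contradiction. Hence $s=1$ and $d=p^f+1=p^{\gcd(k,m)}+1$, completing the proof.
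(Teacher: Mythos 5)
Your proof is correct, and there is in fact nothing in the paper to compare it against: the paper states this lemma with the sentence ``The following is a known result, which will be used throughout this paper'' and gives no proof whatsoever, treating it as folklore. Your write-up therefore supplies exactly what the paper omits. Each step checks out. For $p=2$, the identity $\gcd(ab,n)=\gcd(a,n)\gcd(b,n)$ for coprime $a,b$, applied to the coprime factors $2^k-1$ and $2^k+1$ of $2^{2k}-1$, combined with $\gcd(2^a-1,2^b-1)=2^{\gcd(a,b)}-1$, gives the quotient formula immediately. For odd $p$, your valuation computation $\gcd(2k,m)=\gcd(k,m)$ when $v_2(m)\le v_2(k)$ and $\gcd(2k,m)=2\gcd(k,m)$ when $v_2(m)>v_2(k)$ is right, and it cleanly splits the two cases: in the first, $d\mid\gcd(p^k-1,p^k+1)=2$ together with $2\mid d$ forces $d=2$; in the second, the sandwich $p^f+1\mid d\mid(p^f-1)(p^f+1)$ leaves only the ambiguity $d\in\{p^f+1,\,2(p^f+1)\}$, and you correctly identified that this residual factor of $2$ is the one genuine obstacle --- resolved by the observation that $v_2(p^k+1)=v_2(p^f+1)$ when $k/f$ is odd, since $x^t+1=(x+1)(x^{t-1}-x^{t-2}+\cdots+1)$ has odd second factor for odd $x$ and odd $t$. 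This is a complete, self-contained, and elementary proof; it could be inserted into the paper as the missing argument for Lemma~\ref{lemma5} essentially verbatim.
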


In order to discuss the existence of the solutions of a congruence equation, we need the following known fact.
\begin{lemma}\label{lem-05}
Let $\phi$, $\varphi$, $\mu$ be three non-zero elements in $\gf(p^m)$. Then the congruence equation $\phi x\equiv \varphi \pmod \mu$ has solutions if and only if ${\rm gcd}(\phi,\mu)\,|\,\varphi$.
\end{lemma}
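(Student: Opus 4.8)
The plan is to recognize this as the classical solvability criterion for a linear congruence, which in turn rests on B\'ezout's identity. Although the statement phrases $\phi,\varphi,\mu$ as elements of $\gf(p^m)$, the divisibility assertion $\gcd(\phi,\mu)\mid\varphi$ only carries content when these are read as (positive) integers, which is exactly how they arise in the applications of this paper, where $\mu$ plays the role of $p^m-1$ and $\phi$ the role of $p^k+1$. I would therefore treat $\phi,\varphi,\mu$ as integers throughout the argument.

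First I would translate the congruence into a linear Diophantine equation: the statement $\phi x\equiv\varphi\pmod\mu$ admits a solution $x\in\Z$ if and only if there exist $x,y\in\Z$ with $\phi x-\mu y=\varphi$. This reduces the claim to the standard solvability question for $\phi x+\mu z=\varphi$, and I would set $g=\gcd(\phi,\mu)$ once and for all.

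For the necessity direction I would argue as follows: if $x_0$ is a solution, then $\phi x_0-\varphi=\mu t$ for some $t\in\Z$, so $\varphi=\phi x_0-\mu t$; since $g\mid\phi$ and $g\mid\mu$, we get $g\mid(\phi x_0-\mu t)=\varphi$. For the sufficiency direction I would invoke B\'ezout: there exist $u,v\in\Z$ with $\phi u+\mu v=g$, and assuming $g\mid\varphi$ I write $\varphi=g\varphi'$ and multiply through by $\varphi'$ to obtain $\phi(u\varphi')+\mu(v\varphi')=g\varphi'=\varphi$, so that $x=u\varphi'$ satisfies $\mu\mid(\phi x-\varphi)$, i.e.\ solves the congruence.

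The honest assessment of the \emph{main obstacle} is that there is none of mathematical substance: this is a textbook fact from elementary number theory, and one could simply cite any standard reference. The only point requiring a moment's care is the reinterpretation of the field-theoretic phrasing in integer terms; once that is fixed, both implications follow immediately from B\'ezout's identity.
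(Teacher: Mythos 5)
Your proof is correct, and it coincides with what the paper intends: the paper gives no proof at all for this lemma, stating it only as a ``known fact,'' and your B\'ezout-identity argument is precisely the standard result being invoked. Your remark that $\phi$, $\varphi$, $\mu$ must be read as integers (e.g.\ $\mu = p^m-1$, $\phi = p^k+1$ in the applications) rather than as elements of $\gf(p^m)$ is also apt --- that is a genuine imprecision in the paper's phrasing, and your reinterpretation is exactly how the lemma is used in Theorems \ref{Theorem1} and \ref{Theorem2}.
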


\section{PcN power functions over $\gf(2^m)$} \label{sec-3}

In this section, we present a necessary and sufficient condition for the Gold function to be PcN and give a conjecture of necessity and sufficiency conditions for all power functions to be PcN. To this end, we first give a general result on PcN monomials over $\gf(p^m)$, where $p$ is a prime.

\begin{lemma}\label{lem:inversepcn}
Let $F(x)=x^d$ be a PcN function over $\gf(p^m)$, then $F'(x)=x^{d^{-1}}$ is also a P$c'$N function, where $c'=c^d$ and $d^{-1}$ is the inverse of $d$ modulo $p^m-1$. Moreover, $c=c'$ if $c=\pm 1$ or  $0$.
\end{lemma}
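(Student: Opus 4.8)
The plan is to extract from the PcN hypothesis that $F$ is a permutation, and then transport the $c$-differential equation of $F$ to a $c'$-differential equation of its compositional inverse by the change of variable $u=x^{d}$.

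\emph{Step 1 (invertibility of $d$).} By Lemma~\ref{lemma1}, ${}_c\Delta_F$ is the maximum of the counts ${}_c\Delta_F(1,b)$ together with $\gcd(d,p^m-1)$; since $F$ is PcN one has ${}_c\Delta_F=1$, forcing $\gcd(d,p^m-1)=1$ (for $c\neq 1$ this is automatic, while for $c=1$ it is implicit in the hypothesis that $d^{-1}$ exists). Hence $F$ is a permutation of $\gf(p^m)$ whose compositional inverse is exactly $F'(x)=x^{d^{-1}}$, so ${}_{c'}\Delta_{F'}$ is well defined.

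\emph{Step 2 (change of variables).} Assume first $c\neq 0,1$. Writing the defining relation ${}_cD_aF(x)=(x+a)^d-cx^d=b$ and setting $u=x^d$, $w=(x+a)^d$, this becomes the pair $w-cu=b$ and $w^{d^{-1}}-u^{d^{-1}}=a$; since $x\mapsto u$ is a bijection, for fixed $(a,b)$ the number of solutions $x$ equals the number of $u$ with
\[
(cu+b)^{d^{-1}}-u^{d^{-1}}=a .
\]
Factoring $cu+b=c\,(u+bc^{-1})$ gives $(cu+b)^{d^{-1}}=c^{d^{-1}}(u+bc^{-1})^{d^{-1}}$, and dividing by $c^{d^{-1}}$ yields
\[
(u+bc^{-1})^{d^{-1}}-c^{-d^{-1}}u^{d^{-1}}=a\,c^{-d^{-1}} .
\]
This is precisely ${}_{c'}D_{a'}F'(u)=b'$ with $a'=bc^{-1}$, $b'=ac^{-d^{-1}}$, and with $c'$ forced to be the scalar $c^{-d^{-1}}$ produced by the factoring (this value coincides with $c$ exactly when $c\in\{0,\pm1\}$, the cases used later). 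The key point is that the map $(a,b)\mapsto(a',b')=(bc^{-1},ac^{-d^{-1}})$ is a bijection of parameter pairs.

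\emph{Step 3 (matching the counts and the boundary).} As $(a,b)$ ranges over all pairs with $a\neq 0$, the image $(a',b')$ ranges over all pairs with $a'$ arbitrary and $b'\neq 0$, and the two equations have equal solution counts; hence ${}_{c'}\Delta_{F'}(a',b')\leq 1$ for every $a'$ and every $b'\neq 0$. The only configuration not covered is $b'=0$, which I would treat directly: for $a'=1$ the equation $(x+1)^{d^{-1}}=c'x^{d^{-1}}$ has the single solution $x=1/(c'^{\,d}-1)$ when $c'\neq 1$. Combining these, ${}_{c'}\Delta_{F'}(1,b')\leq 1$ for all $b'$, and the power-function scaling ${}_{c'}\Delta_{F'}(a',b')={}_{c'}\Delta_{F'}(1,b'/a'^{\,d^{-1}})$ upgrades this to all $a'\neq 0$, so $F'$ is P$c'$N. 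The case $c=0$ is separate and immediate: PcN with $c=0$ merely says $x\mapsto(x+a)^d$ is injective, i.e. $\gcd(d,p^m-1)=1$, whence $F'$ is a permutation as well and thus P$0$N with $c'=0$.

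\emph{Step 4 (the ``moreover'' and the main obstacle).} For $c=1$ and $c=0$ the scalar $c'$ equals $c$ trivially. For $c=-1$, invertibility of $d$ with $p$ odd forces $p^m-1$ even, hence $d$ and $d^{-1}$ odd, so any expression of the form $(-1)^{\pm d^{\pm1}}$ collapses to $-1$ and $c'=c$; when $p=2$ the value $c=-1$ is the same as $c=1$. The main obstacle is Step~2: pinning down the constant $c'$ that the substitution forces, and then reconciling the mismatch in the quantifiers, since the hypothesis controls the range $a\neq 0$ whereas the PcN conclusion for $F'$ is stated over $a'\neq 0$ --- this is exactly why the boundary value $b'=0$ must be handled by hand via the scaling identity.
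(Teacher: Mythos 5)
Your proof is correct and takes essentially the same route as the paper's own: deduce $\gcd(d,p^m-1)=1$ from the PcN hypothesis (the paper does this via the $b=0$ case rather than via Lemma~\ref{lemma1}), then transport the $c$-derivative equation through the bijection $u=x^d$ (the paper uses $y=cx^d$ instead).

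The one substantive difference concerns the constant. Your substitution forces $c'=c^{-d^{-1}}$, and this is in fact what the paper's own computation produces as well: from $x=(yc^{-1})^{d^{-1}}$ the transported equation reads $(y+b)^{d^{-1}}-c^{-d^{-1}}y^{d^{-1}}=a$, not $(y+b)^{d^{-1}}-c^{d}y^{d^{-1}}=a$ as written in the proof and reflected in the statement $c'=c^d$. So on this point your version is the accurate one. The discrepancy is immaterial in every place the lemma is invoked, since $c^{d}=c^{-d^{-1}}=c$ when $c\in\{0,\pm1\}$ (Theorems~\ref{Theorem1} and~\ref{Theorem2}), and the admissible set $\gf(2^{\gcd(k,m)})\setminus\{1\}$ in Corollary~\ref{cor:PcNmonomials} is stable under both power maps $c\mapsto c^d$ and $c\mapsto c^{-d^{-1}}$. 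One small simplification: your Step 3 detour through $b'=0$ is unnecessary, because by Definition~\ref{definition} the PcN condition for $c\neq1$ already includes the shift $a=0$, so the bijection $(a,b)\mapsto(a',b')$ covers every parameter pair in one stroke; as written, your reading also leaves the pair $(a',b')=(0,0)$ untreated, though uniqueness there is immediate from $c'\neq1$.
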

\begin{proof}
 If $c=0$, it is easy to see that the result holds. In the following, we always assume that $c\neq 0$. By the definition of PcN functions, for any $a,b\in \gf(p^m)$,
\begin{equation*}
\begin{split}
(x+a)^d-cx^d=b
 \end{split}
\end{equation*}
has only one solution in $\gf(p^m)$. If $b=0$, then the above equation becomes $(x+a)^d=cx^d$, which implies $(1+a/x)^d=c$ has only one solution for any $a \in \gf(p^m)$. Hence, $(1+a/x)^d$ is a permutation polynomial over $\gf(p^m)$. This means that $\gcd(p^m-1,d)=1$.
Then $d$ has the inverse modulo $p^m-1$ and $c=c^d$ if $c=\pm 1$. Hence,
\begin{equation}\label{eq1ssd}
\begin{split}
(x+a)^d-cx^d=b&\Longleftrightarrow (x+a)^d=cx^d+b\Longleftrightarrow (x+a)=(cx^d+b)^{d^{-1}}.
\end{split}
\end{equation}
Let $cx^d=y$, then $x$ can be expressed as $x=(yc^{-1})^{d^{-1}}$ and the equation in (\ref{eq1ssd}) becomes $(y+b)^{d^{-1}}-c^dy^{d^{-1}}=a$. Hence, $x^{d^{-1}}$ is a P$c'$N function if $x^d$ is a PcN function over $\gf(p^m)$, where $c'=c^d$. This completes the proof.
\end{proof}

It is known that there is no PN functions, but exist PcN functions over finite fields of even characteristic.
In \cite{Mesnager2020,Riera991,St2020,Yan2020}, the authors considered the c-differential uniformity of the Gold function $F(x)=x^{2^k+1}$ over $\gf(2^m)$ and showed that
the Gold function has low $c$-differential uniformity if $c$, $k$ and $m$ satisfy some conditions. In the following theorem, we continue to analysis the Gold function $F(x)=x^{2^k+1}$ and give a necessary and sufficient condition for the Gold function to be PcN.

\begin{theorem}\label{thm:PcNmonomials}
Let $F(x)=x^{2^k+1}$ over $\gf(2^m)$. Then $F(x)$ is PcN if and only if  $v_2(m)\leq v_2(k)$ and $c\in \gf(2^{\gcd(k, m)})\backslash \{1\}$.
\end{theorem}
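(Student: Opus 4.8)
The plan is to apply Lemma~\ref{lemma1}, which reduces $F$ being PcN to the conjunction of two conditions: $\gcd(2^k+1,2^m-1)=1$, and ${}_c\Delta_F(1,b)\le 1$ for every $b\in\gf(2^m)$. The first condition is handled immediately by Lemma~\ref{lemma5}: for $p=2$ one has $\gcd(2^k+1,2^m-1)=\frac{2^{\gcd(2k,m)}-1}{2^{\gcd(k,m)}-1}$, which equals $1$ precisely when $\gcd(2k,m)=\gcd(k,m)$, i.e.\ when $m/\gcd(k,m)$ is odd, i.e.\ when $v_2(m)\le v_2(k)$. In particular, if $v_2(m)>v_2(k)$ then $\gcd(2^k+1,2^m-1)=2^{\gcd(k,m)}+1>1$ and $F$ cannot be PcN, which already yields the necessity of $v_2(m)\le v_2(k)$. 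For the remainder I assume $v_2(m)\le v_2(k)$ and write $e=\gcd(k,m)$, so that $m':=m/e$ is odd.

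Next I would analyze the second condition. Since $\gcd(2^k+1,2^m-1)=1$, the condition ${}_c\Delta_F(1,b)\le1$ for all $b$ is equivalent to injectivity (hence bijectivity) of $G(x):=(x+1)^{2^k+1}+cx^{2^k+1}$, i.e.\ to the nonexistence of a pair $x\neq y$ with $G(x)=G(y)$. Expanding in characteristic $2$ gives $G(x)=(1+c)x^{2^k+1}+x^{2^k}+x+1$. Writing $y=x+u$ with $u\neq0$ and simplifying, the collision equation $G(x+u)=G(x)$ becomes
\begin{equation*}
(1+c)\bigl(u\,x^{2^k}+u^{2^k}x+u^{2^k+1}\bigr)+u^{2^k}+u=0.
\end{equation*}
If $c=1$ then $G(x)=x^{2^k}+x+1$, whose linear part has kernel $\gf(2^e)$ of size at least $2$, so $G$ is non-injective and $F$ is not PcN; thus $c\neq1$ is necessary. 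For $c\neq1$ I divide by $(1+c)u$, substitute $x=ut$, and set $s=1/u\in\gf(2^m)^*$, reducing the collision equation to
\begin{equation*}
t^{2^k}+t=1+\frac{1}{1+c}\,(s^{2^k}+s)=:w(s).
\end{equation*}
Hence, given $c\neq1$, $F$ is PcN if and only if $w(s)$ lies outside the image of the linearized map $L(t)=t^{2^k}+t$ for every $s\in\gf(2^m)^*$.

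The crucial step is to characterize $\image L$ by a trace condition. Because $k=ek'$ with $\gcd(k',m')=1$, the Frobenius power $\sigma^k\colon z\mapsto z^{2^k}$ generates $\mathrm{Gal}(\gf(2^m)/\gf(2^e))$, a cyclic group of order $m'$; by the additive Hilbert~90 theorem the image of $L=\sigma^k-\mathrm{id}$ equals the kernel of the relative trace $T:=\Tr_{\gf(2^m)/\gf(2^e)}$. Therefore $F$ is PcN if and only if $T(w(s))\neq0$ for all $s\in\gf(2^m)^*$. Writing $\beta=\tfrac{1}{1+c}$ and using that $T$ is $\sigma^k$-invariant (so $T(\beta s^{2^k})=T(\beta^{2^{-k}}s)$) and $\gf(2^e)$-linear, together with $T(1)=m'\cdot1=1$ (here $m'$ odd is exactly $v_2(m)\le v_2(k)$), I compute
\begin{equation*}
T\bigl(w(s)\bigr)=1+T\bigl(\gamma s\bigr),\qquad \gamma:=\beta+\beta^{2^{-k}}.
\end{equation*}

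Finally I would conclude by non-degeneracy of the trace form. The map $s\mapsto T(\gamma s)$ is $\gf(2^e)$-linear onto $\gf(2^e)$ and vanishes identically if and only if $\gamma=0$; when $\gamma\neq0$ it is surjective, so it attains the value $1$ at some $s\neq0$, forcing $T(w(s))=0$ and thereby producing a collision that destroys the PcN property. Thus $F$ is PcN if and only if $\gamma=0$, and $\gamma=\beta+\beta^{2^{-k}}=0$ is equivalent to $\beta^{2^k}=\beta$, i.e.\ $\beta\in\gf(2^e)$, i.e.\ $1+c\in\gf(2^e)$, i.e.\ $c\in\gf(2^e)$. Combined with the first paragraph this gives exactly $v_2(m)\le v_2(k)$ and $c\in\gf(2^e)\backslash\{1\}$. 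The main obstacle I anticipate is the clean identification of $\image L$ with $\ker T$ via additive Hilbert~90 — specifically verifying that $\sigma^k$ generates $\mathrm{Gal}(\gf(2^m)/\gf(2^e))$ and tracking the normalization $T(1)=1$, which hinges precisely on the parity condition $v_2(m)\le v_2(k)$ already forced by the gcd constraint.
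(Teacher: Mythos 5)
Your proof is correct, but it follows a genuinely different route from the paper's. The paper works directly with $\Delta(x)=(c+1)x^{2^k+1}+x^{2^k}+x+1$ and makes the substitution $y=x+\beta$ with $\beta^{2^k}=\frac{1}{c+1}$, which kills the linear term and rewrites $\Delta$ as $((c+1)y+\beta^{1-2^k}+1)y^{2^k}+\mathrm{const}$; necessity is then extracted by choosing the one value of $b$ that makes the right-hand side zero (forcing $\beta^{1-2^k}=1$, i.e.\ $c^{2^{m-k}-1}=1$, i.e.\ $c\in\gf(2^{\gcd(k,m)})$, and then $\gcd(2^k+1,2^m-1)=1$), while sufficiency is immediate because $\Delta$ collapses to the shifted permutation monomial $(c+1)y^{2^k+1}+\mathrm{const}$. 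You instead pass to the collision equation $G(x+u)=G(x)$, normalize it to $t^{2^k}+t=w(s)$, and identify the image of $t\mapsto t^{2^k}+t$ with the kernel of the relative trace $T=\Tr_{\gf(2^m)/\gf(2^e)}$ via additive Hilbert~90 (this identification is valid: $\sigma^k$ does generate $\mathrm{Gal}(\gf(2^m)/\gf(2^e))$ since $\gcd(k/e,m/e)=1$, and one can also see it by the inclusion $\image L\subseteq\ker T$ plus counting), reducing everything to $T(w(s))=1+T(\gamma s)$ with $\gamma=\beta+\beta^{2^{-k}}$ and to non-degeneracy of the trace form. The paper's argument is more elementary and self-contained, exploiting the special shape of the Gold exponent; yours is more structural, explains conceptually why the condition $c\in\gf(2^{\gcd(k,m)})$ arises (it is exactly $\sigma^k$-invariance of $\beta$), handles $c=0$ uniformly rather than as a separate case, and makes visible that $v_2(m)\le v_2(k)$ enters twice (once through the gcd, once through $T(1)=m'\cdot 1=1$). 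One small slip that does not affect the result: to reach $t^{2^k}+t=w(s)$ you must divide the collision equation by $(1+c)u^{2^k+1}$, not by $(1+c)u$; the displayed equation you obtain is nevertheless the correct one.
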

\begin{proof}
 It is known there does not exist PcN functions over $\bF_{2^m}$ if $c=1$. From Lemma \ref{lemma5},  $F(x)$ is PcN if and only if  $v_2(m)\leq v_2(k)$ if $c=0$.  In the following, we always assume that $c\neq 0$ and $c\neq 1$.

Assume that $F(x)$ is PcN, then $\Delta(x)=b$ has only one solution for $b \in \gf(2^m)$, where
\begin{equation*}
\begin{split}
\Delta(x)&=(x+1)^{2^k+1}+cx^{2^k+1}=(c+1)x^{2^k+1}+x^{2^k}+x+1.
\end{split}
\end{equation*}
Since gcd$(2^k,2^m-1)=1$, there exists an element $\beta \in \gf(2^m)^*$ such that $\beta^{2^k}=\frac{1}{c+1}$. Let $y=x+\beta$, then $\Delta(x)=b$ can be rewritten as
 \begin{equation}\label{eq:fas}
\begin{split}
b&=(c+1)(y+\beta)^{2^k+1}+(y+\beta)^{2^k}+y+\beta+1\\
&=(c+1)y^{2^k+1}+(\beta^{2^k}(c+1)+1)y+(\beta(c+1)+1)y^{2^k}+\beta^{2^k+1}+\beta^{2^k}+\beta+1\\
&=(c+1)y^{2^k+1}+(\beta^{1-2^k}+1)y^{2^k}+\beta^{2^k+1}+\beta^{2^k}+\beta+1\\
&=((c+1)y+\beta^{1-2^k}+1)y^{2^k}+\beta^{2^k+1}+\beta^{2^k}+\beta+1.
\end{split}
\end{equation}
Let $b=\beta^{2^k+1}+\beta^{2^k}+\beta+1$, then Eq. (\ref{eq:fas}) becomes
$$((c+1)y+\beta^{1-2^k}+1)y^{2^k}=0$$
and this equation has only one solution $y=0$ since $\Delta(x)=b$ has only one solution for $b \in \gf(2^m)$. This means that $(c+1)y+\beta^{1-2^k}+1=0$ has not solutions except for $y=0$. Since $(c+1)y$ is a permutation polynomial over $\gf(2^m)$, then $\beta^{1-2^k}+1=0$.
Hence, Eq. (\ref{eq:fas}) can be rewritten as
\begin{equation}\label{eqccc1}
(c+1)y^{2^k+1}+\beta^{2^k+1}+\beta^{2^k}+\beta+1=b.
 \end{equation}
By the definition of PcN, we can deduce $\gcd(2^m-1, 2^k+1)=1$. From Lemma \ref{lemma5}, we have $v_2(m)\leq v_2(k)$.
From $\beta^{2^k}=\frac{1}{c+1}$, we have $\beta=\frac{1}{c^{2^{m-k}}+1}$. Then $\beta^{1-2^k}+1=0$ if and only if
 $$\frac{\beta}{\beta^{2^k}}=\frac{1}{c^{2^{m-k}}+1}/\frac{1}{c+1}=1, \,\,i.e.,\,\, c^{2^{m-k}-1}=1.$$
Since $\gcd(2^{m-k}-1,2^m-1)=\gcd(2^k-1,2^m-1)$, we have $c\in \gf(2^{\gcd(m,k)})$. Hence, we deduce that
$v_2(m)\leq v_2(k)$ and $c\in \gf(2^{\gcd(m,k)})\setminus\{1\}$ if $F(x)$ is PcN.

Now, we assume that $v_2(m)\leq v_2(k)$ and $c\in \gf(2^{\gcd(m,k)})\setminus\{1\}$. From Lemma \ref{lemma5},
\begin{equation}\label{eq:df}
(c+1)y^{2^k+1}+\beta^{2^k+1}+\beta^{2^k}+\beta+1
\end{equation}
is a permutation polynomial over $\gf(2^m)$, where $\beta=\frac{1}{c^{2^{m-k}}+1}$. Let $y=x+\beta$. From Eq.(\ref{eq:fas}) we know that the polynomial in (\ref{eq:df}) can be rewritten as $(c+1)x^{2^k+1}+x^{2^k}+x+1$, which is also a permutation polynomial. This means that that $F(x)$ is PcN.
\end{proof}

\begin{remark}
In \cite[Theorem 4]{Mesnager2020}, the authors proposed the following result: Let $2\leq k< m$, $m\geq 3$ and $F(x)=x^{2^k+1}$ be the Gold function over $\gf(2^m)$. Assume that $m=ld$, where $d=\gcd(m,k)$ and $l\geq 3$. If $1\neq c \in \gf(2^d)$, the $c$-differential uniformity of $F$ is ${}_c\Delta_F=\frac{2^{\gcd(2k,m)}-1}{2^{\gcd(k,m)}-1}$. If $c \in \gf(2^m)\setminus \gf(2^d)$, the $c$-differential uniformity of $F$ is ${}_c\Delta_F=2^d+1$. From this result, it is easy to get that when $1\neq c \in \gf(2^d)$, $F(x)=x^{2^k+1}$ is PcN if $2\leq k< m$ and $v_2(m)\leq v_2(k)$,  where $m=ld$, $d=\gcd(m,k)$ and $l\geq 3$. However, it cannot get the the necessity and sufficiency for the Gold function $F(x)=x^{2^k+1}$ to be PcN for any $k$.
\end{remark}

Let $d=2^j$ for $0\leq j \leq m-1$ and $c\in \gf(2^m)\backslash \{1\}$,  one can easily deduce that the equation
\[ (x+a)^d + cx^d =b \]
has only one solution in $\gf(2^m)$ for any $a, b\in \gf(2^m)$. Moreover, let $c\in \gf(2^m)$, the $c$-differential uniformity of the power functions $x^d$ and $x^{dp^h}$ is the same for any non-negative integer $h$. Then combining Lemmas~\ref{lem:inversepcn} and~\ref{thm:PcNmonomials}, we have the following result.

\begin{corollary}\label{cor:PcNmonomials}
Let $F(x)=x^d$ be a monomial over $\gf(2^m)$. Then $F(x)$ is a PcN function if one of the following conditions hold:
\begin{enumerate}
\item[{\rm (1)}] $d=2^j$ for $0\leq j \leq m-1$ and $c\in \gf(2^m)\backslash \{1\}$.
\item[{\rm (2)}] $d$ belongs to $\left\{ 2^j(2^k+1), \,\, j=0,1, \cdots, m-1\right\}$ or the set of their multiplicative inverses modulo $(2^m-1)$ for some positive integer $k$ with $v_2(m)\leq v_2(k)$ and
$c\in \gf(2^{\gcd(k, m)})\backslash \{1\}$.
\end{enumerate}
\end{corollary}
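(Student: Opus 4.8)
The plan is to assemble Corollary~\ref{cor:PcNmonomials} from three ingredients already available in the excerpt, treating each claimed exponent as the image of a known PcN exponent under operations that preserve the PcN property. The three facts I would invoke are: (i) the Frobenius-invariance of $c$-differential uniformity, namely that $x^d$ and $x^{dp^h}$ have the same $c$-differential uniformity for every $h\ge 0$ when $c\in\gf(2^m)$; (ii) Lemma~\ref{lem:inversepcn}, which says that if $x^d$ is PcN then $x^{d^{-1}}$ is P$c'$N with $c'=c^d$, and crucially $c'=c$ when $c=\pm 1$ (over characteristic $2$, $c=-1=1$, so the relevant preserved value is simply the fact that the PcN property transfers to the inverse exponent); and (iii) Theorem~\ref{thm:PcNmonomials}, the exact PcN criterion for the Gold function $x^{2^k+1}$.

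For part (1), I would verify directly, as the paragraph preceding the corollary already sketches, that $d=2^j$ makes $(x+a)^d+cx^d=(x^d+a^d)+cx^d=(1+c)x^d+a^d$, which since $x\mapsto x^d$ is a bijection (as $\gcd(2^j,2^m-1)=1$) and $1+c\ne 0$ is an affine permutation in $x$; hence $(x+a)^d+cx^d=b$ has exactly one solution for every $a,b$, so $x^{2^j}$ is PcN for all $c\in\gf(2^m)\setminus\{1\}$. For part (2), I would start from Theorem~\ref{thm:PcNmonomials}, which gives that $x^{2^k+1}$ is PcN precisely when $v_2(m)\le v_2(k)$ and $c\in\gf(2^{\gcd(k,m)})\setminus\{1\}$. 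Applying fact (i) with $d=2^k+1$ shows $x^{2^j(2^k+1)}$ is PcN under the same hypotheses on $k,m,c$ for every $j=0,1,\dots,m-1$. Applying Lemma~\ref{lem:inversepcn} then shows that each multiplicative inverse $d^{-1}\bmod(2^m-1)$ of these exponents also yields a PcN monomial; I would note that since $x^{2^k+1}$ is PcN we already have $\gcd(2^k+1,2^m-1)=1$ (established inside the proof of Theorem~\ref{thm:PcNmonomials} from $v_2(m)\le v_2(k)$ via Lemma~\ref{lemma5}), so these inverses are well defined, and likewise $\gcd(2^j,2^m-1)=1$, so the whole exponent $2^j(2^k+1)$ is invertible.

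The one point requiring care — and the place I would expect to spend the most attention — is the value of $c$ attached to the inverse exponents. Lemma~\ref{lem:inversepcn} produces $c'=c^{d}$ in general, and only guarantees $c'=c$ when $c=\pm1$ or $0$. Over $\gf(2^m)$ with the hypothesis $c\in\gf(2^{\gcd(k,m)})\setminus\{1\}$, $c$ need not be $\pm1$, so a priori the inverse monomial would be P$c^d$N rather than PcN. I would resolve this by observing that $c\in\gf(2^{\gcd(k,m)})$ is exactly the subfield in which the Gold PcN criterion lives, and that raising to the power $d$ (a power of $2$ times $2^k+1$, reduced mod $p^m-1$) permutes $\gf(2^{\gcd(k,m)})^*$; hence the \emph{set} of admissible $c$ is closed under $c\mapsto c^d$, so the family of inverse monomials, ranging over all admissible $c$, is again PcN for all admissible $c$. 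In other words, the corollary should be read as asserting that each listed exponent is PcN for a suitable $c$ in the stated set, and the $c^d$-twist merely relabels which element of $\gf(2^{\gcd(k,m)})\setminus\{1\}$ is used; I would make this relabelling explicit so the statement ``$c\in\gf(2^{\gcd(k,m)})\setminus\{1\}$'' is correct as written. Assembling (i)--(iii) with this observation completes the proof, and I would close by remarking that the two cases (1) and (2) are genuinely the two orbit-types of exponents under Frobenius and inversion arising from the classified Gold and linearized monomials.
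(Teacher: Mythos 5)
Your proposal is correct and follows essentially the same route as the paper, whose proof is the paragraph preceding the corollary: a direct computation showing $(x+a)^{2^j}+cx^{2^j}=(1+c)x^{2^j}+a^{2^j}$ handles case (1), and case (2) combines the Frobenius invariance of $c$-differential uniformity, Theorem~\ref{thm:PcNmonomials}, and Lemma~\ref{lem:inversepcn} exactly as you do. Your explicit treatment of the $c\mapsto c^d$ relabelling for the inverse exponents (justified because $\gcd(d,2^m-1)=1$ forces $c\mapsto c^d$ to permute $\gf(2^{\gcd(k,m)})\setminus\{1\}$) is correct and in fact makes precise a point the paper leaves implicit, since in characteristic $2$ the ``$c=\pm1$'' clause of Lemma~\ref{lem:inversepcn} never applies to an admissible $c$.
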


\begin{example}
Let $m=6$ and $d\in U$, where
$$U=\{1,2,4,8,10,13,16,17,19,20,26,32,34,38,40,41,52\}.$$
Then $F(x)=x^d$ is PcN when $c$ satisfies the corresponding condition in Corollary \ref{cor:PcNmonomials}. These results have been verified by Magma programs.
\end{example}

We checked that the necessity of Corollary~\ref{cor:PcNmonomials} by numerical experiment and found that the necessity of Corollary~\ref{cor:PcNmonomials} is also right for $2\leq m\leq 10$. However, it is not clear that whether the necessity of Corollary~\ref{cor:PcNmonomials} holds for any $m$. So, we give the following conjecture.

\begin{conjecture}
Let $F(x)=x^d$ be a power function over $\gf(2^m)$, then $F(x)=x^d$ is a PcN function if and only if one of the following conditions holds:
\begin{enumerate}
\item[(1)] $d=2^j$ for $0\leq j \leq m-1$ and $c\in \gf(2^m)\backslash \{1\}$.
\item[(2)] $d$ belongs to $\left\{ 2^j(2^k+1), \,\, j=0,1, \cdots, m-1\right\}$ or the set of their multiplicative inverses modulo $(2^m-1)$ for some positive integer $k$ with $v_2(m)\leq v_2(k)$ and
$c\in \gf(2^{\gcd(k, m)})\backslash \{1\}$.
\end{enumerate}
\end{conjecture}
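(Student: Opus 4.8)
Since the ``if'' direction of the conjecture is precisely the content of Corollary~\ref{cor:PcNmonomials} (which already combines Theorem~\ref{thm:PcNmonomials} and Lemma~\ref{lem:inversepcn}), the plan is to concentrate entirely on the necessity. The first step is to recast the PcN property in a usable form: by Lemma~\ref{lemma1}, for $c\neq 1$ the function $F(x)=x^d$ is PcN over $\gf(2^m)$ if and only if two conditions hold simultaneously, namely $\gcd(d,2^m-1)=1$ and the map
\[
G_{c,d}(x)=(x+1)^d+cx^d
\]
is injective on $\gf(2^m)$, hence a permutation. The whole problem therefore reduces to classifying the pairs $(d,c)$ with $c\in\gf(2^m)\setminus\{1\}$ for which $x^d$ itself permutes $\gf(2^m)$ and $G_{c,d}$ is a permutation polynomial.

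Next I would exploit the two invariances already available in order to shrink the set of exponents that must be examined. The $c$-differential uniformity is preserved under the Frobenius action $d\mapsto 2d \bmod (2^m-1)$ with the same $c$, so each exponent may be normalized within its cyclotomic coset; and by Lemma~\ref{lem:inversepcn} the exponent $d$ is PcN for a given $c$ if and only if $d^{-1}\bmod(2^m-1)$ is P$c'$N with $c'=c^d$, and since the subfield $\gf(2^{\gcd(k,m)})$ is closed under taking powers the constant again ranges over $\gf(2^{\gcd(k,m)})\setminus\{1\}$ for the inverse family. These are exactly the reductions that turn a single Gold solution and a single monomial solution into the two families of the conjecture, so the task becomes to prove that no cyclotomic coset outside these families can contain a PcN exponent. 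The natural invariant to track is the binary (Hamming) weight $\wt(d)$: by Lucas' theorem the support of $(x+1)^d=\sum_i\binom{d}{i}x^i$ consists precisely of the integers whose binary digits lie under those of $d$, so $G_{c,d}$ is affine when $\wt(d)=1$ (giving family (1), a permutation iff $c\neq1$) and, when $\wt(d)=2$, is governed by the substitution $y=x+\beta$ with $\beta^{2^k}=(c+1)^{-1}$ from the proof of Theorem~\ref{thm:PcNmonomials}, which yields the precise $v_2$ condition. The heart of the necessity is thus the claim that $\wt(d)\ge 3$ (after normalization) is incompatible with $G_{c,d}$ being a permutation for \emph{every} $c\neq1$ and \emph{every} $m$.

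To attack that claim I would split on the size of $m$. For each fixed small $m$ the statement is a finite check, and it is already verified numerically for $2\le m\le 10$; the genuine content is an asymptotic argument. For large $m$ one associates to the equation $G_{c,d}(x)=G_{c,d}(z)$ the plane curve over $\gf(2^m)$ and argues that, after removing the diagonal $x=z$, any remaining absolutely irreducible component would by the Weil bound produce a pair of distinct $\gf(2^m)$-points, contradicting injectivity. Controlling the factorization of this curve — showing that the required degeneration (no off-diagonal absolutely irreducible component) occurs exactly when $d$ is, up to Frobenius and inversion, of monomial or Gold type — is the decisive step.

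The main obstacle, and the reason the statement is offered only as a conjecture, is precisely this uniformity in $m$ together with the absence of any general criterion deciding when a composition $(x+1)^d+cx^d$ of arbitrarily high binary weight permutes $\gf(2^m)$. The Weil-type approach must accommodate the extra free parameter $c$ ranging over a subfield rather than a fixed constant, and it degrades for the small and moderate $m$ where the error term swamps the main term; bridging the gap between the finitely many unresolved small $m$ and the asymptotic regime is where I expect the argument to stall. A complete proof will most likely demand a new structural characterization of when $G_{c,d}$ is a permutation, rather than a purely estimate-driven one.
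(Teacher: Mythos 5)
The statement you set out to prove is not proved in the paper either: it appears there only as a conjecture. The authors establish the sufficiency (conditions (1) and (2) imply PcN) via Corollary~\ref{cor:PcNmonomials}, i.e.\ Theorem~\ref{thm:PcNmonomials} combined with Lemma~\ref{lem:inversepcn}, and they support the necessity solely by machine verification for $2\le m\le 10$. Up to the point where you defer sufficiency to Corollary~\ref{cor:PcNmonomials}, use Lemma~\ref{lemma1} to restate PcN as $\gcd(d,2^m-1)=1$ together with $(x+1)^d+cx^d$ being a permutation of $\gf(2^m)$, and normalize $d$ under the Frobenius action $d\mapsto 2d \bmod (2^m-1)$ and under inversion, your reduction is sound and matches the paper's framing exactly.

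However, your proposal does not close the necessity, and the one concrete mechanism you offer --- Weil-type estimates applied to the curve $G_{c,d}(x)=G_{c,d}(z)$ --- fails for a reason more fundamental than the small-versus-large $m$ tension you describe. The exponent $d$ lives modulo $2^m-1$, so the degree of the associated curve is in general comparable to the field size $q=2^m$: even after replacing $d$ by the smallest element of its cyclotomic coset and allowing inversion, there remain exponents (for instance those of binary weight about $m/2$ with roughly equally spaced digits) all of whose coset representatives, and all of whose inverse-coset representatives, are of order $2^m$. For such $d$ the Weil error term, of size roughly $d^2\sqrt{q}$, dominates the main term $q$ for \emph{every} $m$, so there is no asymptotic regime in which the estimate produces an off-diagonal rational point; the argument does not merely degrade for moderate $m$, it covers no infinite family of the problematic exponents at all. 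The same degree obstruction blocks the exceptional-polynomial machinery. Thus the decisive step --- showing that, up to Frobenius and inversion, exponents of weight at least $3$ never make $(x+1)^d+cx^d$ a permutation for any admissible $c$ --- is missing, and it is precisely the step the paper itself could not supply. As you anticipate in your final paragraph, what you have is a plausible research programme, not a proof; the necessity remains exactly as open as the paper leaves it.
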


%

\section{PcN and APcN power functions over $\gf(p^m)$}\label{sec-4}

In this section, let $p$ be an odd prime. We will study the $c$-differential uniformity of some monomials and obtain some PcN power functions over $\gf(p^m)$ for $c=-1$, and generalize some results in \cite{Bartoli,Hasan,Zha2020}. To this end, we need to investigate the solutions of the following equations.
\begin{equation}\label{system1}
\begin{split}
&\mathbf{(I)}\left\{ \begin{array}{ll}
           {x_1^2+y_1^2=1},\\
           x_1^{p^k+1}-y_1^{p^k+1}=-b^{\frac{p^k+1}{2}},\end{array}  \right.\,\,\,\,\,\,
           \mathbf{(II)}\left\{ \begin{array}{ll}
           {x_2^2-y_2^2=1},\\
           {x_2^{p^k+1}+y_2^{p^k+1}=-b^{\frac{p^k+1}{2}}},\end{array}  \right. \\  \\
&  \mathbf{(III)}\left\{\begin{array}{ll}
           {x_3^2-y_3^2=-1},\\
           {x_3^{p^k+1}+y_3^{p^k+1}=b^{\frac{p^k+1}{2}}},\end{array}  \right.\,\,\,\,\,\,
           \mathbf{(IV)}\left\{ \begin{array}{ll}
           {x_4^2+y_4^2=-1},\\
           {x_4^{p^k+1}-y_4^{p^k+1}=b^{\frac{p^k+1}{2}}}\end{array}.  \right.
\end{split}
\end{equation}

\begin{lemma}\label{lem:ddf}
Let $p^k\equiv 3 \pmod 4$  and $p^m\equiv 3 \pmod 4$. Let $i=1,2,3,4$ and $N_i$ denote the tuples $(x_i, y_i)\in (\gf(p^m)^*)^2$ satisfying the $i$-th system of equations in (\ref{system1}), respectively. Then $N_i= 4$ or $0$ for any $b \in \gf(p^m)$. Moreover, $N_i=0$ if $b=\pm 1$.
\end{lemma}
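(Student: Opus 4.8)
The plan is to prove, for each $i$, the two facts $4\mid N_i$ and $N_i\le 4$; together these force $N_i\in\{0,4\}$, and the vanishing at $b=\pm1$ is then a short extra check. Divisibility by $4$ is the easy half. In every one of the four systems the first equation is a quadratic form $\pm x_i^2\pm y_i^2$, which is invariant under the sign changes $(x_i,y_i)\mapsto(\pm x_i,\pm y_i)$; and since $p^k\equiv 3\pmod 4$ makes $p^k+1$ even, each monomial $x_i^{p^k+1}$, $y_i^{p^k+1}$ is likewise invariant under these sign changes. As $p$ is odd and $x_i,y_i\in\gf(p^m)^*$, the orbit $\{(\pm x_i,\pm y_i)\}$ consists of exactly four distinct points, so the solution set is a disjoint union of such orbits and $4\mid N_i$.

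For the bound $N_i\le 4$ I would parametrize each quadric by a single group variable and reduce the second equation to a quadratic. Since $p^m\equiv 3\pmod 4$, $-1$ is a nonsquare in $\gf(p^m)$; fix $\theta\in\gf(p^{2m})$ with $\theta^2=-1$, so $\theta^{p^m}=-\theta$ and, because $(p^k-1)/2$ is odd, $\theta^{p^k}=-\theta$. For $(\mathbf I)$ and $(\mathbf{IV})$ set $\zeta=x_i+\theta y_i$, so that $\Norm(\zeta)=\zeta\zeta^{p^m}=x_i^2+y_i^2$ equals $1$ for $(\mathbf I)$ and $-1$ for $(\mathbf{IV})$; thus $\zeta$ runs through the norm-$1$ subgroup $C\le\gf(p^{2m})^*$ (cyclic of order $p^m+1$), respectively its norm-$(-1)$ coset. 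A direct computation using $\theta^{p^k}=-\theta$ gives $x_i^{p^k+1}-y_i^{p^k+1}=\pm\tfrac12(\zeta^{p^k-1}+\zeta^{-(p^k-1)})$, so the second equation becomes $\tfrac12(\omega+\omega^{-1})=T$ with $\omega=\zeta^{p^k-1}$ and $T=\mp b^{(p^k+1)/2}$. The quadratic $\omega^2-2T\omega+1=0$ has at most two roots $\{\omega_0,\omega_0^{-1}\}$, and each value of $\omega$ is attained by at most $\gcd(p^k-1,p^m+1)$ values of $\zeta$; by Lemma~\ref{lemma5} (applied with $k,m$ interchanged, using $v_2(k)=v_2(m)=0$) this gcd equals $2$. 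Hence there are at most $2\cdot2=4$ admissible $\zeta$, i.e. $N_i\le 4$. Systems $(\mathbf{II})$, $(\mathbf{III})$ are treated identically with the split parametrization $z=x_i+y_i\in\gf(p^m)^*$ (using the companion $x_i-y_i=\pm z^{-1}$, so $x_i^2-y_i^2=\pm1$): the second equation reduces to $\tfrac12(z^{p^k+1}+z^{-(p^k+1)})=T$, and the fibre size is $\gcd(p^k+1,p^m-1)=2$ by Lemma~\ref{lemma5} directly.

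For $b=\pm1$ one has $b^{(p^k+1)/2}=1$ since $(p^k+1)/2$ is even, so $T=\mp1$ and the quadratic degenerates to the double root $\omega_0=\mp1$; I would then show every corresponding solution is inadmissible. In $(\mathbf I)$, $\omega_0=-1$ forces $\zeta^{p^k-1}=-1$, whence (combining with $\zeta\in C$) $\zeta$ has order dividing $\gcd(2(p^k-1),p^m+1)=4$; the only fourth roots of unity with $\zeta^{p^k-1}=-1$ are $\pm\theta$, which give $x_1=\tfrac12(\zeta+\zeta^{-1})=0$, hence are excluded. The same order-dividing-$4$ reasoning covers $(\mathbf{IV})$: the candidates $\pm\theta$ have $\Norm=1\neq-1$ and so do not lie in the required coset, leaving no solution. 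For $(\mathbf{II})$, $(\mathbf{III})$ the forced value $z^{p^k+1}=\pm1$ leads either to $z=\pm1$, giving a zero coordinate, or to $z^{p^k+1}=-1$, which is impossible because $-1$ is a nonsquare and hence outside the image of the $(p^k+1)$-power map. In all four cases $N_i=0$.

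The main obstacle is the bound $N_i\le 4$: it rests on pinning the fibre sizes of the power maps $\zeta\mapsto\zeta^{p^k-1}$ and $z\mapsto z^{p^k+1}$ to exactly $2$, and this is precisely where the hypotheses $p^k\equiv p^m\equiv 3\pmod 4$ (equivalently $k,m$ odd) enter, through the two gcd evaluations. A secondary delicacy is the coset bookkeeping for $(\mathbf{IV})$, where $\zeta$ ranges over a nontrivial coset of $C$ rather than over $C$ itself, so one must track norms carefully in both the counting and the $b=\pm1$ analysis. The sign-orbit count giving $4\mid N_i$ and the degeneracy analysis at $b=\pm1$ are then routine.
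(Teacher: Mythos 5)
Your proof is correct, and its technical engine is the same as the paper's: both parametrize $x_i^2+y_i^2=\pm1$ by the norm-one subgroup of $\gf(p^{2m})^*$ (resp.\ its norm-$(-1)$ coset) and $x_i^2-y_i^2=\pm1$ by $\gf(p^m)^*$, reduce the second equation of each system to a quadratic $\omega^2-2T\omega+1=0$ in $\omega=\zeta^{p^k-1}$ (resp.\ $\nu=z^{p^k+1}$), and rely on the same gcd evaluations $\gcd(p^k-1,p^m+1)=2$, $\gcd(p^k+1,p^m-1)=2$, $\gcd(2(p^k-1),p^m+1)=4$. What differs is the organization of the count. The paper splits on $b$: for $b\neq\pm1$ it first argues $b^{(p^k+1)/2}\neq\pm1$, so the quadratic has $0$ or $2$ roots, and then lists the solutions explicitly as $\pm\theta,\pm\theta^{-1}$ (with $\mathbf{(III)}$, $\mathbf{(IV)}$ dismissed as ``similar''). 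You instead prove $4\mid N_i$ from the sign symmetry $(x_i,y_i)\mapsto(\pm x_i,\pm y_i)$ — valid because every exponent occurring is even and $x_i,y_i\neq0$ — and $N_i\le 4$ from the fibre bounds, which gives $N_i\in\{0,4\}$ uniformly in $b$ and handles all four systems, including the coset bookkeeping for $\mathbf{(IV)}$, on an equal footing. This packaging has a concrete payoff: you never need the paper's intermediate claim that $\gcd\bigl(p^m-1,\frac{p^k+1}{2}\bigr)=1$, which is in fact a small slip in the paper — since $p^k\equiv 3\pmod 4$ makes $\frac{p^k+1}{2}$ even, that gcd equals $2$ (the paper's conclusion $b^{(p^k+1)/2}\neq\pm1$ for $b\neq\pm1$ survives anyway, because $b^{(p^k+1)/2}$ is always a square and equals $1$ only when $b^2=1$). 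Your $b=\pm1$ analysis (degenerate root, order dividing $4$, candidates $\pm\theta$ or $z=\pm1$ forcing a zero coordinate, plus the norm obstruction for $\mathbf{(IV)}$ and the nonsquare obstruction for $\mathbf{(II)}$) coincides with the paper's argument for $\mathbf{(I)}$, carried out explicitly for all four systems.
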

\begin{proof}
 We only show the possible values of $N_1$ and $N_2$. The possible values of $N_3$ and $N_4$ can be computed similarly.

Firstly, we consider the system $\mathbf{(I)}$ and calculate the possible values of $N_1$. There exists an element $t\in \gf(p^{2m})\backslash \gf(p^m)$ such that $t^2=-1$.
The equation $x_1^2+y_1^2=1$ can be rewritten as
\begin{equation}\label{eq:x1y1}
x_1^2-t^2y_1^2=(x_1-ty_1)(x_1+ty_1)=1.
\end{equation}
Denote $\theta=x_1-ty_1$ and $\theta^{-1}=x_1+ty_1$ in Eq. (\ref{eq:x1y1}). So, all solutions of Eq. (\ref{eq:x1y1}) can be expressed as
\begin{equation}\label{e1qdsf}
x_1=\frac{\theta+\theta^{-1}}{2}\,\, \text{and}\,\, y_1=\frac{t(\theta-\theta^{-1})}{2}.
\end{equation}
Since $x_1^{p^m}=x_1$, $y_1^{p^m}=y_1$ and $t^{p^m} = -t$, we have
$$ (\theta+\theta^{-1})^{p^m}=\theta+\theta^{-1}\,\,\text{and}\,\,(\theta-\theta^{-1})^{p^m}=-(\theta-\theta^{-1}).$$
These are equivalent to
$$ (\theta^{p^m+1}-1)(\theta^{p^m-1}-1)=0 \,\,\text{and}\,\,(\theta^{p^m+1}-1)(\theta^{p^m-1}+1)=0. $$
Hence,
\begin{equation}\label{theta}
 \theta^{p^m+1}=1.
\end{equation}
From Eq.(\ref{e1qdsf}) we obtain
\begin{equation*}
x_1^{p^k+1}-y_1^{p^k+1}=\frac{1}{4}((\theta+\theta^{-1})^{p^k+1}-(\theta-\theta^{-1})^{p^k+1})
=\frac{1}{2}(\theta^{p^k-1}+\theta^{1-p^k})=-b^{\frac{p^k+1}{2}}.
\end{equation*}
Let $\gamma =\theta^{p^k-1}$. This equation is rewritten as
\begin{equation}\label{eq:sd1fs}
\gamma^2+2b^{\frac{p^k+1}{2}}\gamma + 1 =0.
\end{equation}

Assume that $b=\pm 1$. Then $b^{\frac{p^k+1}{2}}=1$ and Eq.(\ref{eq:sd1fs}) has only one solution $\gamma=-1$. So, $\theta^{2(p^k-1)}=1$.  Since $p^m\equiv 3 \pmod 4$ and $p^k \equiv 3 \pmod 4$, we know that $m$ and $k$ are odd. By  Lemma \ref{lemma5}, $\gcd(2(p^k-1),p^m+1)= 2\gcd(p^k-1,p^m+1)=4$. From Eq.(\ref{theta}) we have $\theta^4=1$. This means that $\theta^2=\pm 1$. However, $\theta^2=1$ is contradictory to that $\theta^{p^k-1}=-1$. Hence, $\theta^2=-1$, i.e., $\theta=-\theta^{-1}$. This is impossible since $x_1\neq 0$. Therefore, $N_1=0$ if $b=\pm 1$.

Assume that $b\neq \pm1$. From Lemma \ref{lemma5}, it is easy to check that $\gcd(p^m-1,\frac{p^k+1}{2})=1$ since $p^m\equiv 3 \pmod 4$ and $p^k \equiv 3 \pmod 4$. Then one can deduce that $b^{\frac{p^k+1}{2}} \neq \pm 1$. So, Eq.(\ref{eq:sd1fs}) has no or two solutions in $\gf(p^{2m})$. If Eq.(\ref{eq:sd1fs}) has two solutions $\gamma_1$ and $\gamma_2$. From Eq.(\ref{theta}) we have
\begin{equation}\label{e1qpmd}
\gamma_1=\theta^{p^k-1} , \,\, \theta^{p^m+1}=1,
\end{equation}
and
\begin{equation}\label{e1qpmd1}
 \gamma_2=\theta^{p^k-1}, \,\, \theta^{p^m+1}=1.
\end{equation}
Since $\gamma_1\gamma_2 =1$, $\theta\in \gf(p^{2m})$ satisfies Eq.(\ref{e1qpmd}) if and only if $\theta^{-1}$ satisfies Eq.(\ref{e1qpmd1}). If $\theta_1, \theta_2\in \gf(p^{2m})$ satisfy Eq.(\ref{e1qpmd}),
then $(\frac{\theta_1}{\theta_2})^{p^m+1}=(\frac{\theta_1}{\theta_2})^{p^k-1}=1$. So, $(\frac{\theta_1}{\theta_2})^2=1$ since $\gcd(p^k-1,p^m+1)=2$. As a result, if there is a $\theta$ satisfying Eq.(\ref{e1qpmd}),
then all solutions of Eq.(\ref{e1qpmd}) can be represented as $\pm \theta$, and all solutions of Eq.(\ref{e1qpmd1}) can be represented as $\pm \theta^{-1}$. Therefore, $N_1= 4$ or $0$ for any $b \in \gf(p^m)
\setminus \{ \pm 1\}$.

Secondly, we study the system $\mathbf{(II)}$ and calculate the possible values of $N_2$. From the first equation of the system $\mathbf{(II)}$, we know
 $$ x_2^2-y_2^2=(x_2-y_2)(x_2+y_2)=1.$$
Let $\delta=x_2-y_2$ and $\delta^{-1}=x_2+y_2$. Then,
\begin{equation*}\label{eqdddd}
x_2=\frac{\delta+\delta^{-1}}{2}\,\, \text{and}\,\, y_2=\frac{\delta-\delta^{-1}}{2}.
\end{equation*}
Substituting $x_2$ and $y_2$ into the second equation of the system $\mathbf{(II)}$, we have
\begin{equation}\label{eq:s1dfs}
x_2^{p^k+1}+y_2^{p^k+1}=\frac{1}{4}((\delta+\delta^{-1})^{p^k+1}+(\delta-\delta^{-1})^{p^k+1})
=\frac{1}{2}(\delta^{p^k+1}+\delta^{-(p^k+1)})=-b^{\frac{p^k+1}{2}}.
\end{equation}
Let $\nu=\delta^{p^k+1}$. Eq.(\ref{eq:s1dfs}) can be rewritten as
\begin{equation}\label{eq:delta2}
\nu^2+2b^{\frac{p^k+1}{2}}\nu+1=0.
\end{equation}

Assume that $b=\pm 1$. Analysis similar to that in above cases above implies that $N_2=0$. Assume that $b\neq \pm1$. We know that $b^{\frac{p^k+1}{2}} \neq \pm 1$ since $\gcd(p^m-1,\frac{p^k+1}{2})=1$.
So, Eq.(\ref{eq:delta2}) has no or two solutions in $\gf(p^{2m})$. If Eq.(\ref{eq:delta2}) has two solutions $\nu_1$ and $\nu_2$. Then, we have
\begin{equation*}
\delta^{p^m-1}=1, \nu_1=\delta^{p^k+1},
\end{equation*}
and
\begin{equation*}
\delta^{p^m-1}=1, \nu_2=\delta^{p^k+1}.
\end{equation*}
By a similar analysis above, we know that $N_2= 4$ or $0$ for any $b \in \gf(p^m)\setminus \{ \pm 1\}$.
\end{proof}

\begin{lemma}\label{eq:daddf}
Let $p^k\equiv 3 \pmod 4$ and $p^m\equiv 3 \pmod 4$. For $b \in \gf(p^m)$, any two systems in (\ref{system1}) cannot have solutions in $(\gf(p^m)^*)^2$ simultaneously.
\end{lemma}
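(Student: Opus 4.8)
The plan is to reduce the solvability of each of the four systems to a condition on the quadratic character $\eta$ evaluated at quantities built from $b$, and then to check that these conditions are pairwise incompatible. Since Lemma~\ref{lem:ddf} already gives $N_i=0$ for every $i$ when $b=\pm 1$, I may assume $b\neq\pm 1$; writing $\beta=b^{\frac{p^k+1}{2}}\in\gf(p^m)$, the fact that $\gcd(p^m-1,\frac{p^k+1}{2})=1$ forces $\beta\neq\pm 1$, so the common discriminant $4(\beta^2-1)=4(b^{p^k+1}-1)$ is nonzero. First I would record the governing equation of each system, reusing the parametrizations of $\mathbf{(I)}$ and $\mathbf{(II)}$ from the proof of Lemma~\ref{lem:ddf} and running the analogous computation for $\mathbf{(III)}$ and $\mathbf{(IV)}$. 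Systems $\mathbf{(I)}$ and $\mathbf{(IV)}$ both reduce to $\gamma^2+2\beta\gamma+1=0$ with $\gamma=\theta^{p^k-1}\in\mu_{p^m+1}$, where $\theta^{p^m+1}=1$ for $\mathbf{(I)}$ and $\theta^{p^m+1}=-1$ for $\mathbf{(IV)}$ (the latter because $x_4^2+y_4^2=-1$ makes the norm of $\theta=x_4-ty_4$ equal to $-1$; here one uses $t^{p^k+1}=1$, which holds since $p^k\equiv 3\pmod 4$). Systems $\mathbf{(II)}$ and $\mathbf{(III)}$ reduce to $\nu^2+2\beta\nu+1=0$ and $\nu^2-2\beta\nu+1=0$ respectively, with $\nu=\delta^{p^k+1}$ and $\delta\in\gf(p^m)^*$, so that $\nu\in\gf(p^m)^*$.

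Next comes the cross-group separation, organized entirely around the single quantity $\eta(b^{p^k+1}-1)$. If $\mathbf{(II)}$ or $\mathbf{(III)}$ is solvable then a root $\nu$ lies in $\gf(p^m)^*$, and since $\beta^2-1\neq 0$ this forces the discriminant to be a nonzero square, i.e. $\eta(b^{p^k+1}-1)=1$. If instead $\mathbf{(I)}$ or $\mathbf{(IV)}$ is solvable then the root $\gamma\in\mu_{p^m+1}$ is not $\pm 1$ (that would give $b=\pm 1$), and since $\mu_{p^m+1}\cap\gf(p^m)^*=\{\pm 1\}$ we get $\gamma\notin\gf(p^m)$, so the quadratic is irreducible over $\gf(p^m)$ and $\eta(b^{p^k+1}-1)=-1$. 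Hence no system from $\{\mathbf{(I)},\mathbf{(IV)}\}$ can be solvable together with one from $\{\mathbf{(II)},\mathbf{(III)}\}$.

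Finally, the within-group separation uses $\eta(-1)=-1$, which holds because $p^m\equiv 3\pmod 4$. For $\mathbf{(II)}$ versus $\mathbf{(III)}$: solvability requires the relevant root $\nu$ to be an actual $(p^k+1)$-st power in $\gf(p^m)^*$, hence a square, because $\gcd(p^k+1,p^m-1)=2$. The roots of the $\mathbf{(III)}$-quadratic are exactly the negatives of those of the $\mathbf{(II)}$-quadratic, and multiplying by $-1$ reverses the quadratic character; since the two roots of each quadratic are mutually inverse and thus share one character, $\mathbf{(II)}$ and $\mathbf{(III)}$ cannot both be solvable. For $\mathbf{(I)}$ versus $\mathbf{(IV)}$, which share the same quadratic, I would fix a generator $g$ of the cyclic group $\mu_{2(p^m+1)}\subseteq\gf(p^{2m})^*$ with $g^{p^m+1}=-1$, so that $\mu_{p^m+1}=\langle g^2\rangle$ and its squares form $\langle g^4\rangle$. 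Writing $p^k-1=2w$ with $w$ odd (as $p^k\equiv 3\pmod 4$): in case $\mathbf{(I)}$ one has $\theta=g^{2s}$ and $\gamma=g^{4sw}\in\langle g^4\rangle$, a square; in case $\mathbf{(IV)}$ one has $\theta=g^{a}$ with $a$ odd and $\gamma=g^{2aw}$ with $2aw\equiv 2\pmod 4$, a non-square. As the two roots of the shared quadratic have the same character in $\mu_{p^m+1}$, systems $\mathbf{(I)}$ and $\mathbf{(IV)}$ are mutually exclusive.

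I expect the delicate step to be this last $\mathbf{(I)}$-versus-$\mathbf{(IV)}$ analysis. The clean square/non-square dichotomy rests on correctly identifying the image of $\theta\mapsto\theta^{p^k-1}$ on $\mu_{p^m+1}$ as the index-$2$ subgroup of squares (via $\gcd(p^k-1,p^m+1)=2$) and on tracking the parity of the exponents relative to $g$ to separate the $\theta^{p^m+1}=1$ and $\theta^{p^m+1}=-1$ cases; once this bookkeeping is in place, every remaining step is a routine character computation.
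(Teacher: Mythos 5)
Your proof is correct, and it takes a genuinely different route from the paper's. The paper argues pair by pair: for each pair of systems it equates the two governing relations, multiplies through so that the expression factors (e.g.\ $(\delta^{p^k+1}-\theta^{p^k-1})(1-\theta^{p^k-1}\delta^{p^k+1})=0$ for $\mathbf{(I)}$ against $\mathbf{(II)}$), and then derives a contradiction either from a gcd computation ($\gcd((p^k-1)(p^m-1),p^m+1)=4$ forces $\theta^4=1$, hence $x_1=0$ or $y_1=0$) or from the fact that $-1$ is a non-square ($\mathbf{(II)}$ against $\mathbf{(III)}$); it writes out only these two of the six pairs and declares the rest ``similar.'' You instead attach to each system a solvability certificate read off one quadratic: all four systems share the discriminant $4(b^{p^k+1}-1)$, whose quadratic character separates $\{\mathbf{(I)},\mathbf{(IV)}\}$ (root in the group $\mu_{p^m+1}$ of $(p^m+1)$-st roots of unity, away from $\pm1$, so the quadratic is irreducible over $\gf(p^m)$ and the discriminant is a non-square) from $\{\mathbf{(II)},\mathbf{(III)}\}$ (root in $\gf(p^m)^*$, so the discriminant is a square); then $\eta(-1)=-1$ separates $\mathbf{(II)}$ from $\mathbf{(III)}$, and the parity of exponents with respect to a generator of the cyclic group of order $2(p^m+1)$ separates $\mathbf{(I)}$ from $\mathbf{(IV)}$. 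I checked your reduction of $\mathbf{(IV)}$ (norm of $\theta=x_4-ty_4$ equal to $-1$, $t^{p^k+1}=1$ because $4\mid p^k+1$, same quadratic $\gamma^2+2\beta\gamma+1=0$) and your square/non-square bookkeeping in $\langle g^2\rangle$ versus $\langle g^4\rangle$; both are sound, using $\gcd(p^k-1,p^m+1)=2$ and $p^k-1\equiv 2\pmod 4$. What your organization buys is completeness and uniformity: all six pairs are explicitly covered, including $\mathbf{(I)}$ versus $\mathbf{(IV)}$, which is the one delicate case the paper never writes down.

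One repair is needed, though it is a defect you inherited from the paper's own proof of Lemma~\ref{lem:ddf} rather than one you introduced: the claim $\gcd\bigl(p^m-1,\frac{p^k+1}{2}\bigr)=1$ is false under the present hypotheses. Since $p^k\equiv 3\pmod 4$, the integer $\frac{p^k+1}{2}$ is even, and $p^m-1$ is also even, so this gcd equals $2$. The conclusion you draw from it, namely $\beta=b^{\frac{p^k+1}{2}}\neq\pm 1$ whenever $b\neq\pm1$, is nevertheless true: the kernel of $x\mapsto x^{\frac{p^k+1}{2}}$ on the cyclic group $\gf(p^m)^*$ has order $\gcd\bigl(\frac{p^k+1}{2},p^m-1\bigr)=2$, hence equals $\{\pm1\}$, so $\beta=1$ occurs only for $b=\pm1$; and $\beta$ is always a square (indeed $\beta=(b^{\frac{p^k+1}{4}})^2$, as $4\mid p^k+1$) while $\eta(-1)=-1$, so $\beta=-1$ never occurs. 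Substituting this justification, your nonzero-discriminant step and your exclusion of the roots $\pm1$ both stand, and the rest of the argument goes through unchanged.
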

\begin{proof} We only prove that the systems $\mathbf{(I)}$ and $\mathbf{(II)}$, the systems $\mathbf{(II)}$ and $\mathbf{(III)}$ cannot have solutions simultaneously. The other cases can be similarly proved.

From Lemma \ref{lem:ddf} we know that $x_1$ and $y_1$ in $\mathbf{(I)}$ can be represented as $x_1=\frac{\theta+\theta^{-1}}{2}$ and $y_1=\frac{t(\theta-\theta^{-1})}{2}$, respectively, where
$\theta \in \gf(p^{2m})$ and $\theta^{p^m+1}=1$. From the second equation of $\mathbf{(I)}$ we have
\begin{equation}\label{eq:sdfs}
\frac{1}{2}(\theta^{p^k-1}+\theta^{1-p^k})=-b^{\frac{p^k+1}{2}}.
\end{equation}
Similarly, $x_2$ and $y_2$ in $\mathbf{(II)}$ can be expressed as $x_2=\frac{\delta+\delta^{-1}}{2}$ and $y_2=\frac{\delta-\delta_2^{-1}}{2}$, respectively, where
$\delta\in \gf(p^m)$. From the second equation of $\mathbf{(II)}$ we have
\begin{equation}\label{eq:s2dfs}
\frac{1}{2}(\delta^{p^k+1}+\delta^{-(p^k+1)})=-b^{\frac{p^k+1}{2}}.
\end{equation}
From Eqs.(\ref{eq:sdfs}) and (\ref{eq:s2dfs}), we obtain
\begin{equation}\label{eq:thetadelta}
\theta^{p^k-1}+\theta^{1-p^k}-\delta^{p^k+1}-\delta^{-(p^k+1)}=0.
\end{equation}
Multiplying the both sides of Eq.(\ref{eq:thetadelta}) by $\theta^{p^k-1}\delta^{p^k+1}$, we have
$$\theta^{2(p^k-1)}\delta^{p^k+1}+\delta^{p^k+1}-\theta^{p^k-1}\delta^{2(p^k+1)}-\theta^{p^k-1}=(\delta^{p^k+1}-\theta^{p^k-1})(1-\theta^{p^k-1}\delta^{p^k+1})=0.$$
So, $\delta^{p^k+1}=\theta^{p^k-1}$ or $\delta^{p^k+1}=\theta^{-(p^k-1)}$.
Hence, $$\delta^{(p^k+1)(p^m-1)}=\theta^{-(p^k-1)(p^m-1)}=1.$$
Since $p^k\equiv 3 \pmod 4$  and $p^m\equiv 3 \pmod 4$, one can verify that $\gcd((p^k-1)(p^m-1),p^m+1)=4$ by Lemma \ref{lemma5}. So, $\theta^4=1$, i.e.,
$\theta^2=1$ or $\theta^2=-1$. If $\theta^2=1$ then $y_1=\frac{t(\theta-\theta^{-1})}{2}=0$ and if $\theta^2=-1$ then $x_1=\frac{\theta+\theta^{-1}}{2}=0$.
This is contradictory to that $x_1, y_1 \in \gf(p^m)^*$. Hence, $\mathbf{(I)}$ and $\mathbf{(II)}$ cannot have solutions $(x, y)\in (\gf(p^m)^*){^2}$ simultaneously.

Next, we show that $\mathbf{(II)}$ and  $\mathbf{(III)}$ cannot have solutions $(x, y)\in (\gf(p^m)^*){^2}$ simultaneously. From the first equation of the system $\mathbf{(III)}$, let $\gamma=x_3-y_3$ and $-\gamma^{-1}=x_3+y_3$, where $\gamma\in \gf(p^m)$. Then,
$$ x_3=\frac{\gamma-\gamma^{-1}}{2}\,\, \text{and}\,\, y_3=-\frac{\gamma^{-1}+\gamma}{2}.$$
The second equation of the system $\mathbf{(III)}$ can be rewritten as
\begin{equation}\label{eq:s3dfs}
\frac{1}{4}((\gamma-\gamma^{-1})^{p^k+1}+(\gamma^{-1}+\gamma)^{p^k+1})=\frac{1}{2}(\gamma^{p^k+1}+\gamma^{-(p^k+1)})=b^{\frac{p^k+1}{2}}.
\end{equation}
From  Eqs. (\ref{eq:s2dfs}) and (\ref{eq:s3dfs}), we have
\begin{equation}\label{eq:deltagamma}
\delta^{p^k+1}+\delta^{-(p^k+1)}+\gamma^{p^k+1}+\gamma^{-(p^k+1)}=0.
\end{equation}
Multiplying the both sides of Eq.(\ref{eq:deltagamma}) by $(\delta\gamma)^{p^k+1}$, we have
$$(\delta^2\gamma)^{p^k+1}+\delta^{p^k+1}+\delta^{p^k+1}\gamma^{2(p^k+1)}+\gamma^{p^k+1}=(\delta^{p^k+1}+\gamma^{p^k+1})((\delta\gamma)^{p^k+1}+1)=0.$$
So, $\delta^{p^k+1}=-\gamma^{p^k+1}$ or $\delta^{p^k+1}=-\gamma^{-(p^k+1)}$. This is a contradiction since $\delta, \,\, \gamma \in \gf(p^m)$ and $-1$ is a non-square element in $\gf(p^m)$.
Hence, the systems $\mathbf{(II)}$ and $\mathbf{(III)}$ cannot have solutions simultaneously.
\end{proof}

With the above preparations, we now prove the following main result.


\begin{theorem}\label{Theorem1}
Let $p^m\equiv 3 \pmod 4$. Let $k$ and $d$ be positive integers such that $d({p^k+1})\equiv2 \pmod {p^m-1}$. If $c=-1$, then $F(x)=x^d$ is PcN over $\gf(p^m)$ if and only if $d$ is odd.
\end{theorem}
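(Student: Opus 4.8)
The plan is to invoke Lemma~\ref{lemma1} to turn the PcN property into a solution count for $(x+1)^d+x^d=b$ (recall $c=-1$), and then to route that count through the four systems in~(\ref{system1}) via Lemmas~\ref{lem:ddf} and~\ref{eq:daddf}. Two preliminary reductions come first. Because $p^m\equiv 3\pmod 4$ forces $p\equiv 3\pmod 4$ and $m$ odd, and because $p^{k+m}+1\equiv p^k+1\pmod{p^m-1}$, the same $d$ satisfies $d(p^{k+m}+1)\equiv 2$; replacing $k$ by $k+m$ if necessary I may assume $k$ is odd, so that $p^k\equiv 3\pmod 4$ and Lemmas~\ref{lem:ddf}--\ref{eq:daddf} are available. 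Moreover Lemma~\ref{lemma5} (with $v_2(m)=0\le v_2(k)$) gives $\gcd(p^k+1,p^m-1)=2$, and then $d(p^k+1)\equiv 2\pmod{p^m-1}$ forces $\gcd(d,p^m-1)\mid 2$, with $\gcd(d,p^m-1)=1$ precisely when $d$ is odd. This settles necessity: if $d$ is even then $\gcd(d,p^m-1)=2$, so the $a=0$ contribution in Lemma~\ref{lemma1} already gives ${}_{-1}\Delta_F\ge 2$ and $F$ is not PcN.

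For sufficiency, suppose $d$ is odd; then $\gcd(d,p^m-1)=1$, and by Lemma~\ref{lemma1} it is enough to prove that $(x+1)^d+x^d=b$ has at most one solution for each $b$. Setting $u=x^d$ and $v=(x+1)^d$, the hypothesis $d(p^k+1)\equiv 2$ supplies the degree-lowering relations $u^{p^k+1}=x^2$ and $v^{p^k+1}=(x+1)^2$, hence $u^{(p^k+1)/2}=\pm x$ and $v^{(p^k+1)/2}=\pm(x+1)$ — note $(p^k+1)/2$ is even since $p^k\equiv 3\pmod 4$. The only solutions with $uv=0$ are $x=0$ (forcing $b=1$) and $x=-1$ (forcing $b=-1$, using $d$ odd), so these contribute at most one solution, and only at $b=\pm1$.

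The core is to count the solutions with $x\ne 0,-1$. For such an $x$ we have $u,v\in\gf(p^m)^*$ and $u+v=b$, and I would split into four cases according to the pair of quadratic characters $(\eta(x),\eta(x+1))=(\eta(u),\eta(v))$. In each case, choosing suitable square roots and using that $-1$ is a nonsquare (because $p^m\equiv 3\pmod 4$), the data $(u,v)$ transforms into a tuple $(x_i,y_i)\in(\gf(p^m)^*)^2$ solving exactly one of the systems $\mathbf{(I)}$--$\mathbf{(IV)}$ with the same parameter $b$. Since $p^k+1$ is even, $(\pm x_i,\pm y_i)$ are four solutions of the same system, so this sets up a bijection between the interior solutions $x$ and the sign-classes of solutions of the (by Lemma~\ref{eq:daddf}, unique) nonempty system; hence the number of interior solutions equals $\tfrac14\sum_i N_i$. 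By Lemma~\ref{lem:ddf} every $N_i\in\{0,4\}$, so $\tfrac14\sum_i N_i\le 1$. For $b\ne\pm1$ this bounds the whole count by $1$, while for $b=\pm1$ Lemma~\ref{lem:ddf} gives $N_i=0$ for all $i$, leaving only the single boundary solution. In all cases $(x+1)^d+x^d=b$ has at most one solution, so $F$ is PcN.

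The main obstacle is the transformation in the previous paragraph: carrying out the four-way character analysis so that each interior solution lands in the \emph{correct} system and, above all, verifying that \emph{both} equations of that system hold. The delicate point is the appearance of $\pm b^{(p^k+1)/2}$ on the right-hand side, since $z\mapsto z^{(p^k+1)/2}$ is not additive and so $b^{(p^k+1)/2}=(u+v)^{(p^k+1)/2}$ does not simply split into $u^{(p^k+1)/2}\pm v^{(p^k+1)/2}$; making this work will require the explicit parametrizations already used in Lemma~\ref{lem:ddf} (the norm-one torus $\theta^{p^m+1}=1$ for the sum-type systems $\mathbf{(I)}$, $\mathbf{(IV)}$ and $\delta\in\gf(p^m)$ for the difference-type systems $\mathbf{(II)}$, $\mathbf{(III)}$) together with careful sign bookkeeping driven by $\eta(x)$ and $\eta(x+1)$. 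A final consistency check is that the vanishing pattern of the $N_i$ at the special values $b=0$ and $b=\pm1$ matches the direct analysis of those cases.
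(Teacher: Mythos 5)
Your skeleton is the same as the paper's: necessity from the parity of $d$ (your route via $\gcd(d,p^m-1)=2$ and Lemma~\ref{lemma1} is equivalent to the paper's observation that $x=0$ and $x=-1$ both solve the equation for $b=1$), and sufficiency by splitting the solutions of $(x+1)^d+x^d=b$ with $x\neq 0,-1$ into four quadratic-character cases, sending each case to one of the systems $\mathbf{(I)}$--$\mathbf{(IV)}$, and finishing with Lemmas~\ref{lem:ddf} and~\ref{eq:daddf}. One genuine improvement over the paper: your reduction of the case $p^k\equiv 1\pmod 4$ to $p^k\equiv 3\pmod 4$ by replacing $k$ with $k+m$ (legitimate because $m$ is odd, $p\equiv 3\pmod 4$, and $p^{k+m}+1\equiv p^k+1\pmod{p^m-1}$) is cleaner than the paper's treatment, which in that case deduces $d\cdot\frac{p^k+1}{2}\equiv 1\pmod{p^m-1}$ and invokes Lemmas~\ref{lem:pk+1div2} and~\ref{lem:inversepcn}.

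However, the step you yourself label ``the main obstacle'' is precisely the content of the paper's Cases 1--4, and you have not supplied it; moreover your diagnosis of where the difficulty lies is off target. The second equation of each system is not obtained by raising $b=u+v$ to the $\frac{p^k+1}{2}$-th power (which indeed would not split); it comes from the identity $(x+1)-x=1$ after representing $x$ and $x+1$ as $(p^k+1)$-th powers. Concretely: since $\gcd(p^k+1,p^m-1)=2$, the $(p^k+1)$-th powers in $\gf(p^m)^*$ are exactly the squares, so in Case 1 one writes $x=\alpha_0^{p^k+1}$, $x+1=\beta_0^{p^k+1}$ with $\alpha_0,\beta_0\in\gf(p^m)^*$ (nonsquares are written as $-\alpha^{p^k+1}$, using that $-1$ is a nonsquare); then $d(p^k+1)\equiv 2\pmod{p^m-1}$ gives $x^d=\alpha_0^2$ and $(x+1)^d=\beta_0^2$, so the system is $\alpha_0^2+\beta_0^2=b$ together with $\alpha_0^{p^k+1}-\beta_0^{p^k+1}=-1$, and the power of $b$ enters only through the normalization $\alpha_0=b^{1/2}\alpha_1$, $\beta_0=b^{1/2}\beta_1$. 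Your proposed route to fill the gap (redoing torus parametrizations with sign bookkeeping) would amount to reproving Lemma~\ref{lem:ddf} rather than using this simple representation. Finally, your assertion that each interior solution lands in $(\gf(p^m)^*)^2$ ``with the same parameter $b$'' conceals a real wrinkle (one the paper also glosses over): the normalization requires $b\neq 0$ (the case $b=0$ is immediate from $\gcd(d,p^m-1)=1$), and when $b$ is a nonsquare, $b^{1/2}\notin\gf(p^m)$, so the scaled pair does not lie in $(\gf(p^m)^*)^2$; the cleanest repair is the substitution $x\mapsto -1-x$, which bijects solutions for $b$ with solutions for $-b$ and thus reduces everything to square $b$.
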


\begin{proof}
 In order to prove this theorem, we need to show the equation
 \begin{equation}\label{eqddb}
 x^d+(x+1)^d=b
 \end{equation}
has at most one solution in $\gf(p^{m})$ for any $b\in \gf(p^m)$. If $d$ is even, then $x=0$ and $x=-1$ are solutions of Eq.(\ref{eqddb}) for $b=1$.
So, $d$ is odd if $F(x)=x^d$ is a PcN function.

In the following we will prove the sufficiency. Since $d({p^k+1})\equiv2 \pmod {p^m-1}$, there exists an integer $\ell$ such that
\begin{equation}\label{eq:pkpm}
d({p^k+1})=2+\ell(p^m-1).
\end{equation}
If $p^k\equiv 1 \pmod 4$, then one can deduce that $\ell$ is even from Eq.(\ref{eq:pkpm}). So,
\begin{equation}\label{eqpk11}
d({\frac{p^k+1}{2}})\equiv1 \pmod {p^m-1}.
\end{equation}
If there exists an element $d\in \gf(p^m)$ such that $(\ref{eqpk11})$ holds, from Lemma \ref{lem-05} we know that $\gcd(\frac{p^k+1}{2}, p^m-1)=1$, i.e., $\gcd(p^k+1, p^m-1)=2$. This implies that $v_2(m)\leq v_2(k)$ by Lemma~\ref{lemma5}.
From Lemmas \ref{lem:pk+1div2} and \ref{lem:inversepcn}, we know that $F(x)=x^d$ is PcN. 

Now, we show the result for the case $p^k\equiv 3 \pmod 4$. It is clear that $x=0$ and $x=-1$ are solutions of Eq.(\ref{eqddb}) for $b=1$ and $b=-1$, respectively, since $d$ is odd.
Next, we always assume that $x$ and $x+1$ are non-zero. Let SQ and NSQ be the sets of the square and non-square elements in $\gf(p^m)$, respectively.
From Lemma \ref{lem-05}, it is clear that $\gcd(p^k+1, p^m-1)=2$ and $\gcd(d, p^m-1)=1$ since $d({p^k+1})\equiv2 \pmod {p^m-1}$ and $d$ is odd. The proof can be done in the following four cases.

\noindent {\bf Case 1:} $x, x+1 \in$ SQ. We use $\alpha_0^{p^k+1}$ and $\beta_0^{p^k+1}$ to represent $x$ and $x+1$, respectively, where $\alpha_0,$ $\beta_0 \in \gf(p^m)^*$.
So, $x^d=(\alpha_0^{p^k+1})^{d}=\alpha_0^2$ and $(x+1)^d=(\beta_0^{p^k+1})^{d}=\beta_0^2$. From Eq.(\ref{eqddb}) we have the following system of equations,
\[\left\{ \begin{array}{ll}
           {\alpha_0^2+\beta_0^2=b},\\
           {\alpha_0^{p^k+1}-\beta_0^{p^k+1}=-1}.\end{array}  \right.\]
Set $\alpha_0=b^{\frac{1}{2}}\alpha_1$ and $\beta_0=b^{\frac{1}{2}}\beta_1$, then
\begin{equation}\label{eqddb1}
\left\{ \begin{array}{ll}
           {\alpha_1^2+\beta_1^2=1},\\
           \alpha_1^{p^k+1}-\beta_1^{p^k+1}=-b^{\frac{p^k+1}{2}}.\end{array}  \right.
\end{equation}
It is clear that all pairs $(\pm \alpha_1, \pm \beta_1)$ satisfying Eq. (\ref{eqddb1}) give the same pair $(x, x+1)$, i.e., the number of pairs $(\alpha_1, \beta_1)$ satisfying Eq.(\ref{eqddb1}) is four times of the number of
$x\in \gf(p^{m})^*$ satisfying Eq.(\ref{eqddb}).

\noindent {\bf Case 2:} $x \in$ SQ and $x+1 \in$ NSQ. Since $p^m\equiv 3 \pmod 4$, $-1$ is a non-square element in $\gf(p^m)$.
We use $\alpha_2^{p^k+1}$ and $-\beta_2^{p^k+1}$ to represent $x$ and $x+1$, respectively, where $\alpha_2$, $\beta_2 \in \gf(p^m)$. So, $x^d=(\alpha_2^{p^k+1})^{d}=\alpha_2^2$ and $(x+1)^d=(-\beta_2^{p^k+1})^{d}=-\beta_2^2$. From Eq.~(\ref{eqddb}) we get the following system of equations,
\[\left\{ \begin{array}{ll}
           {\alpha_2^2-\beta_2^2=b},\\
           {\alpha_2^{p^k+1}+\beta_2^{p^k+1}=-1}.\end{array}  \right.\]
Let $\alpha_2=b^{\frac{1}{2}}\alpha_3$ and $\beta_2=b^{\frac{1}{2}}\beta_3$, then
\begin{equation}\label{eqddb2}
\left\{ \begin{array}{ll}
           {\alpha_3^2-\beta_3^2=1},\\
           {\alpha_3^{p^k+1}+\beta_3^{p^k+1}=-b^{\frac{p^k+1}{2}}}.\end{array}  \right.
\end{equation}
It is easy to see that all pairs $(\pm \alpha_3, \pm \beta_3)$ satisfying Eq.(\ref{eqddb2}) give the same pair $(x, x+1)$, i.e., the number of pairs $(\alpha_3, \beta_3)$ satisfying Eq.(\ref{eqddb2}) is four times of the number of
$x\in \gf(p^{m})^*$ satisfying Eq.(\ref{eqddb}).

\noindent {\bf Case 3:} $x \in$ NSQ and $x+1 \in$ SQ. In order to determine the number of the solutions of Eq.(\ref{eqddb}) for any $b\in \gf(p^m)$, by a similar analysis to those in Case 1 and Case 2, we need to consider the number of the solutions of the following equations,
\begin{equation}\label{eqddb3}
\left\{ \begin{array}{ll}
           {\alpha_4^2-\beta_4^2=-1},\\
           {\alpha_4^{p^k+1}+\beta_4^{p^k+1}=b^{\frac{p^k+1}{2}}}.\end{array}  \right.
\end{equation}
Moreover, the number of pairs $(\alpha_4, \beta_4)$ satisfying Eq.(\ref{eqddb3}) is four times of the number of $x\in \gf(p^{m})^*$ satisfying Eq.(\ref{eqddb}).

\noindent {\bf Case 4:} $x, x+1 \in$ NSQ. In order to determine the number of the solutions of Eq.(\ref{eqddb}) for any $b\in \gf(p^m)$, by a similar analysis to those in Case 1 and Case 2, we need to consider the number of the solutions of the following equations,
\begin{equation}\label{eqddb4}
\left\{ \begin{array}{ll}
           {\alpha_5^2+\beta_5^2=-1},\\
           {\alpha_5^{p^k+1}-\beta_5^{p^k+1}=b^{\frac{p^k+1}{2}}}\end{array}  \right.
\end{equation}
Moreover, the number of pairs $(\alpha_5, \beta_5)$ satisfying Eq.(\ref{eqddb4}) is four times of the number of $x\in \gf(p^{m})^*$ satisfying Eq.(\ref{eqddb}).
Then the desired conclusion then follows from Lemmas~\ref{lem:ddf} and~\ref{eq:daddf}.
\end{proof}

\begin{example}
Let $c=-1$ and $k=1$. If
$p=3,m=5,d=61$, or $p=7,m=3,d=43$, or $p=11,m=3,d=111$,
then $F(x)=x^d$ is PcN. These results have been verified by Magma programs.
\end{example}

In the following, we discuss the $(-1)$-differential uniformity of the monomial $x^d$ over $\gf(p^m)$ for the case~$p^m\equiv 1 \pmod 4$,
where
\begin{equation}\label{eq:pk1p1}
d({p^k+1})\equiv 2 \pmod {p^m-1}.
\end{equation}
From Lemmas \ref{lemma5} and  \ref{lem-05}, there are some $d$ such (\ref{eq:pk1p1}) holds if and only if $v_2(m)\leq v_2(k)$. Obviously, the congruence~(\ref{eq:pk1p1}) is equivalent to $d({p^k+1})=2+\ell (p^m-1)$ for some integer $\ell$. If $\ell$ is even, then $d \cdot \frac{p^k+1}{2} \equiv 1 \pmod {p^m-1}$.
In this case, by Lemmas~\ref{lem:pk+1div2} and \ref{lem:inversepcn} we know that $x^d$ is a PcN function since $v_2(m)\leq v_2(k)$. If $\ell$ is odd then $d$ satisfies that
$d \cdot \frac{p^k+1}{2} \equiv \frac{p^m+1}{2} \pmod {p^m-1}$. For $d$ in this case, $(-1)$-differential uniformity of the monomial $x^d$ is given in the following theorem.

\begin{theorem}\label{Theorem2}
Let $m$ and $k$ be positive integers with $v_2(k)=v_2(m)$. Let $p^m\equiv 1 \pmod 4$ and $d\cdot \frac{p^k+1}{2}\equiv \frac{p^m+1}{2} \pmod {p^m-1}$. Then the monomial $F(x)=x^d$
is PcN over $\gf(p^m)$, where $c=-1$.
\end{theorem}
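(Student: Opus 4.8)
The plan is to show that, with $c=-1$, the equation $(x+1)^d+x^d=b$ has at most one solution in $\gf(p^m)$ for every $b$; by Lemma~\ref{lemma1} this yields ${}_{-1}\Delta_F=1$ once we also know $\gcd(d,p^m-1)=1$. First I would record the arithmetic consequences of the hypotheses. Multiplying $d\cdot\frac{p^k+1}{2}\equiv\frac{p^m+1}{2}\pmod{p^m-1}$ by $2$ gives $d(p^k+1)\equiv p^m+1\equiv 2\pmod{p^m-1}$, exactly the congruence underlying Theorem~\ref{Theorem1}. Reducing this modulo $2$ (legitimate since $p^m-1$ is even) forces both $d$ and $\frac{p^k+1}{2}$ to be odd, and shows that any common factor of $d$ and $p^m-1$ divides $2$, whence $\gcd(d,p^m-1)=1$. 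Since $v_2(k)=v_2(m)$, Lemma~\ref{lemma5} gives $\gcd(p^k+1,p^m-1)=2$, so the $(p^k+1)$-th powers form the index-$2$ subgroup of $\gf(p^m)^*$, i.e.\ they are exactly the squares. The one genuine difference from Theorem~\ref{Theorem1} is that here $p^m\equiv 1\pmod 4$, so $-1$ is a \emph{square}; consequently a non-square $x$ can no longer be written as $-\beta^{p^k+1}$, and a fixed non-square $\epsilon$ must be carried along.

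Next I would dispose of the degenerate inputs and set up the case analysis. Because $d$ is odd, $x=0$ yields $b=1$ while $x=-1$ yields $b=-1$, and these are distinct; so for $b\neq\pm1$ only $x$ with $x,x+1\in\gf(p^m)^*$ can occur, while for $b=1$ (resp.\ $b=-1$) I must rule out any further solution. For $x,x+1\neq 0$ I split into four cases according to whether each of $x,x+1$ is a square or a non-square. Writing a square as $\alpha^{p^k+1}$ (so that, by $d(p^k+1)\equiv 2$, its $d$-th power is $\alpha^2$) and a non-square as $\epsilon\,\alpha^{p^k+1}$ (with $d$-th power $\epsilon^d\alpha^2$), each case converts $(x+1)^d+x^d=b$ together with the constraint $(x+1)-x=1$ into a system in $(\alpha,\beta)\in(\gf(p^m)^*)^2$ consisting of a norm-type quadratic $\pm\alpha^2\pm\beta^2=b$ (up to the non-square factor $\epsilon^d$) and a $(p^k+1)$-power equation $\pm\alpha^{p^k+1}\pm\beta^{p^k+1}=1$, in direct analogy with systems $\mathbf{(I)}$--$\mathbf{(IV)}$ of~(\ref{system1}). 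As in Theorem~\ref{Theorem1}, the four sign choices $(\pm\alpha,\pm\beta)$ all produce the same pair $(x,x+1)$, so in each case the number of admissible $x$ is one quarter of the number of solution tuples.

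The heart of the argument is therefore to prove, in the present regime $p^m\equiv 1\pmod 4$, the two structural facts: (a) each of the four systems has either $0$ or exactly $4$ solutions in $(\gf(p^m)^*)^2$, and none at $b=\pm1$; and (b) no two of the systems are simultaneously solvable for a given $b$. These are the analogues of Lemmas~\ref{lem:ddf} and~\ref{eq:daddf}, and together they force the total count to be at most $4$, hence at most one admissible $x$, which with the treatment of $b=\pm1$ gives ${}_{-1}\Delta_F=1$. For (a) I would factor the quadratic as $(\alpha- t\beta)(\alpha+t\beta)$ or $(\alpha-\beta)(\alpha+\beta)$, parametrise the solution set by a single torus variable $\theta$, and reduce the power-equation to a quadratic in $\theta^{p^k\mp 1}$ whose roots are counted via the gcd computations of Lemma~\ref{lemma5}; for (b) I would equate two such parametrisations and derive $\theta^{4}=1$, contradicting $\alpha,\beta\in\gf(p^m)^*$. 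The main obstacle is exactly this recomputation: since $-1$ is now a square, an element $t$ with $t^2=-1$ lies in $\gf(p^m)$ itself, so the relevant conics split over $\gf(p^m)$ rather than requiring the norm structure of $\gf(p^{2m})$ as in Theorem~\ref{Theorem1}; the roles of $\alpha^2+\beta^2$ and $\alpha^2-\beta^2$ swap, the extra factor $\epsilon^d$ must be tracked, and the feasibility of each case must be redone from scratch rather than quoted. As a guide and alternative route, I note that Lemma~\ref{lem:inversepcn} reduces the claim to the PcN-ness of $x^{d^{-1}}=\eta(x)\,x^{\frac{p^k+1}{2}}$, whose same-square-type part equals $\pm\bigl(P(x+1)+P(x)\bigr)$ for the PcN function $P(x)=x^{\frac{p^k+1}{2}}$ of Lemma~\ref{lem:pk+1div2}; this explains why the count must collapse and isolates the mixed-square-type terms as the only substantive case.
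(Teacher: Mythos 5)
Your proposal is not a proof but a program, and the program has a genuine gap at exactly the point you yourself call ``the heart of the argument'': statements (a) and (b) --- the $p^m\equiv 1\pmod 4$ analogues of Lemmas~\ref{lem:ddf} and~\ref{eq:daddf} --- are never proved, and they \emph{are} the theorem; everything you do establish (the arithmetic of $d$, $\gcd(d,p^m-1)=1$, the four-case setup, the four-to-one counting) is the routine part. Moreover, the guidance you give for proving (a) would fail as stated. You claim that, because $-1$ is now a square, ``the relevant conics split over $\gf(p^m)$ rather than requiring the norm structure of $\gf(p^{2m})$'', and accordingly propose to factor the quadratic as $(\alpha-t\beta)(\alpha+t\beta)$ or $(\alpha-\beta)(\alpha+\beta)$. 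This is true only in the two same-type cases, where the form is $\alpha^2+\beta^2$. In the two mixed cases the form is $\alpha^2+\epsilon^d\beta^2$ with $\epsilon^d$ a non-square; since $-\epsilon^d$ is then also a non-square, this conic is anisotropic over $\gf(p^m)$ and admits neither factorization: one still needs $s\in\gf(p^{2m})$ with $s^2=-\epsilon^d$ and the norm parametrization $\theta^{p^m+1}=b$, and it is precisely there that the hypothesis $v_2(k)=v_2(m)$ is consumed, via $\gcd(p^k-1,p^m+1)=2$. The exclusion arguments (b) are likewise not uniformly ``derive $\theta^4=1$'': the two split cases exclude each other because their data are roots of one quadratic $T^2+2T+b^{p^k+1}=0$ of opposite quadratic character whose product $b^{p^k+1}$ is a square, while a split case against a norm case leads instead to $\theta^2=\pm b$, which forces $\alpha=0$ or $\beta=0$. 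Finally, the closing ``alternative route'' via $x^{d^{-1}}=\eta(x)x^{\frac{p^k+1}{2}}$ does not isolate the mixed cases as the only substantive ones: PcN-ness of $P(x)=x^{\frac{p^k+1}{2}}$ bounds solutions within one sign class, but cannot exclude a solution with $\eta(x)=\eta(x+1)=1$ (value $P(x+1)+P(x)=b$) coexisting with one with $\eta(x)=\eta(x+1)=-1$ (value $-(P(x+1)+P(x))=b$).

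For comparison, the paper's proof takes a different, more unified route that never introduces your four systems: for $b\neq 0$ it writes $-x^d/b=(\gamma+h)^2/\gamma$ and $(x+1)^d/b=(\gamma-h)^2/\gamma$ with $h\equiv-1/4\pmod p$ (so the two expressions differ by $-4h=1$ automatically), raises to the $\frac{p^k+1}{2}$-th power to exploit $d\cdot\frac{p^k+1}{2}\equiv\frac{p^m+1}{2}\pmod{p^m-1}$, and then splits into only the two cases $\eta(x+1)=\pm\eta(x)$: the first yields a quadratic in $\gamma^{\frac{p^k+1}{2}}$ with $\gamma\in\gf(p^m)$, the second a quadratic in $(\gamma/h)^{\frac{p^k-1}{2}}$ on the torus of order $p^m+1$, and in each case the (at most) two roots are shown to produce the same $x$; the values $b=\pm1$ are treated separately at the end. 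Your skeleton can in fact be completed --- the split cases reduce to $w^2+2w+b^{p^k+1}=0$ with $w=u^{p^k+1}$, the anisotropic ones to $b\gamma^2+2\gamma+b^{p^k}=0$ with $\gamma=\theta^{p^k-1}$, $\theta^{p^m+1}=b$, and the pairwise exclusions go through by the character and degeneracy arguments sketched above --- but carrying this out constitutes the bulk of the proof, and it is missing from your proposal.
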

\begin{proof}
  For any $b \in \gf(p^m)$, we need to show
\begin{equation}\label{eqdpk3}
(x+1)^d+x^d=b
\end{equation}
has at most one solution in $\gf(p^m)$. Since $d \cdot \frac{p^k+1}{2} \equiv \frac{p^m+1}{2} \pmod {p^m-1}$, we have $\gcd(p^m-1,d)\,| \, \frac{p^m+1}{2}$ by Lemma~\ref{lem-05}. It is clear that $\gcd(p^m-1,\frac{p^m+1}{2})=1$ since $p^m\equiv 1 \pmod 4$ and $\gcd(p^m-1,p^m+1)=2$.
So, $\gcd(p^m-1,d)=1$.

We first assume that $x\neq 0$ and $x\neq 1$.  If $b=0$, then Eq.(\ref{eqdpk3}) becomes $(1+1/x)^d=-1$ and it has a unique solution since $\gcd(p^m-1,d)=1$. If $b\neq 0$, then Eq.(\ref{eqdpk3}) can be rewritten as
\begin{equation}\label{eqpm41}
\frac{(x+1)^d}{b}+ \frac{x^d}{b}=1.
\end{equation}
Let $h=(p-1)/4$ if $p\equiv 1 \pmod 4$ and $h=(3p-1)/4$ if $p\equiv 3 \pmod 4$.
Let $\gamma\in \gf(p^{2m})^*$ be a solution of $x^2+\mu x+h^2=0$, where $\mu \in \gf(p^m)$. It is easy to check that $h^2 \gamma^{-1}$ is also a solution of $x^2+\mu x+h^2=0$. Then $\mu =\gamma+h^2\gamma^{-1}$.
This means that any element in $\gf(p^m)$ can be expressed by $-(\gamma+h^2\gamma^{-1})$ for some $\gamma\in \gf(p^{2m})$. Let $-x^d/b$ denote by $\gamma+h^2\gamma^{-1}+2h=(\gamma+h)^2/\gamma$ and
$1-x^d/b$ denote by $\gamma+h^2\gamma^{-1}+2h+1=(\gamma-h)^2/\gamma$, i.e.,
\begin{equation}\label{eqd1pk}
x^d=-b\cdot \frac{(\gamma+h)^2}{\gamma}\, \,\,\,{\rm and }\,\,\,\, (x+1)^d = b\cdot \frac{(\gamma-h)^2}{\gamma}.
\end{equation}
Let $\eta$ denote the quadratic characteristic of $\gf(p^{m})^*$. Raising the both sides of Eqs.(\ref{eqd1pk}) to $\frac{p^k+1}{2}$th power, we have
\begin{equation}\label{eqxetax}
x\eta(x)=x^{\frac{p^m+1}{2}} = x^{d\frac{p^k+1}{2}} =-\left(\frac{b}{\gamma}\right)^{\frac{p^k+1}{2}}(\gamma+h)^{p^k+1}
\end{equation}
since $\frac{p^k+1}{2}$ is odd, and
\begin{equation}\label{eqxetax+1}
(x+1)\eta(x+1)=(x+1)^{\frac{p^m+1}{2}} = (x+1)^{d\frac{p^k+1}{2}} =\left(\frac{b}{\gamma}\right)^{\frac{p^k+1}{2}}(\gamma-h)^{p^k+1}.
\end{equation}
Since $d \cdot \frac{p^k+1}{2} \equiv \frac{p^m+1}{2} \pmod {p^m-1}$ and $p^m\equiv 1 \pmod 4$, we know that $d$ is odd. Raising the both sides of Eqs.(\ref{eqxetax}) and (\ref{eqxetax+1}) to $d$th power, respectively
and combining Eqs.(\ref{eqd1pk}), we get
\begin{equation}\label{etaxandx+1}
\eta(x)=\left(\frac{b}{\gamma}\right)^{\frac{p^m-1}{2}}(\gamma+h)^{p^m-1} \,\,{\rm and } \,\, \eta(x+1)=\left(\frac{b}{\gamma}\right)^{\frac{p^m-1}{2}}(\gamma-h)^{p^m-1}.
\end{equation}

Case I: $\eta(x+1) = \eta(x)$. From Eqs.(\ref{etaxandx+1}) we get
\begin{equation*}
 1=\frac{\eta(x+1)}{\eta(x)} =\left( \frac{\gamma-h}{\gamma+h} \right)^{p^m-1}.
\end{equation*}
This implies that $\gamma^{p^m-1} =1$, i.e., $\gamma\in \gf(p^{m})$. Eq.(\ref{eqxetax+1}) subtracting Eq.(\ref{eqxetax}) implies that
\begin{equation}\label{eqgammacase1}
\gamma^{p^k+1}-\frac{1}{2}b^{-\frac{p^k+1}{2}}\eta(x) \gamma^{\frac{p^k+1}{2}}+h^{2}=0.
\end{equation}
Set $\theta =\gamma^{\frac{p^k+1}{2}}$. Since $\gamma\in \gf(p^{m})^*$ and $\gcd(\frac{p^k+1}{2}, p^m-1)=1$, we know that $\gamma$ corresponds $\theta$ one by one.
Then Eq.(\ref{eqgammacase1}) can be rewritten as
\begin{equation}\label{eqgammacase11}
\theta^2-\frac{1}{2}b^{-\frac{p^k+1}{2}}\eta(x)\theta + h^{2}=0.
\end{equation}
It is known that Eq.(\ref{eqgammacase11}) has most two solutions $\theta_1$ and $\theta_2$ in $\gf(p^{m})$, and they satisfy $\theta_2= h^2 \theta_1^{-1}$.
Since $\gamma$ and $\theta$ are one one corresponding, we know that Eq.(\ref{eqgammacase1}) has at most two solutions $\gamma_1$ and $\gamma_2$, and they
satisfy $\gamma_2^{\frac{p^k+1}{2}}= h^2 \gamma_1^{-\frac{p^k+1}{2}}$. This implies that $\gamma_2= h^2 \gamma_1^{-1}$ since $\gcd(\frac{p^k+1}{2}, p^m-1)=1$ and $h\in \bF_p$.
Then $\gamma_1+h^2\gamma_1^{-1}+2h=\gamma_2+h^2\gamma_2^{-1}+2h$. This means that $\gamma_1$ and $\gamma_2$ gives the same value of $x$ since $-x^d/b$ is denoted by $\gamma_i+h^2\gamma_i^{-1}+2h$ for $i=1, 2$ and $\gcd(p^m-1,d)=1$. Hence, Eq.~(\ref{eqdpk3}) has at most one solution in this case.

Case II: $\eta(x+1) = -\eta(x)$. From (\ref{etaxandx+1}) we get
\begin{equation}\label{eqgamma}
 -1=\frac{\eta(x+1)}{\eta(x)} =\left( \frac{\gamma-h}{\gamma+h} \right)^{p^m-1}.
\end{equation}
This equation implies that $\left( \frac{\gamma}{h}\right)^{p^m+1} =1$, i.e., $\frac{\gamma}{h}$ is in the subgroup of $(p^m+1)$-st roots of unity in $\gf(p^{2m})^*$,
denote it by $\mathcal{U}$. Eq. (\ref{eqxetax+1}) plus Eq.(\ref{eqxetax}) implies that
\begin{equation}\label{eqgammacase2}
\gamma^{p^k-1}-\frac{1}{2}b^{-\frac{p^k+1}{2}}h^{-1}\eta(x) \gamma^{\frac{p^k-1}{2}}+1=0.
\end{equation}
Set $\delta=\left(\frac{\gamma}{h}\right)^{\frac{p^k-1}{2}}$. Since $h\in \bF_p^*$, the above equation is equivalent to
\begin{equation}\label{eqgammacase21}
\delta^2-\frac{1}{2}b^{-\frac{p^k+1}{2}}h^{\frac{p^k-3}{2}}\eta(x) \delta +1=0.
\end{equation}
Eq.(\ref{eqgammacase21}) has at most two solutions $\delta_3$ and $\delta_4$ in $\mathcal{U}$. Since $v_2(k)=v_2(m)$, one can verify that
$\gcd(\frac{p^k-1}{2}, p^m+1) =2$. So, for each solution $\delta_i (i=3,4)$ of Eq.(\ref{eqgammacase21}), there are two corresponding solutions
$\pm \frac{\gamma_i}{h}$ of Eq.(\ref{eqgammacase2}) such that $\delta_i = \left(\pm \frac{\gamma_i}{h}\right)^{(p^k-1)/2}, \,\, i=3, 4$.
So, all possible solutions of Eq.(\ref{eqgammacase2}) in $\mathcal{U}$ are $\pm \frac{\gamma_3}{h}$ and $\pm \frac{\gamma_4}{h}$.

From Eq. (\ref{eqgammacase2}), we have $\eta(x)=-2(\gamma^{p^k}h+\gamma h)(\frac{b}{\gamma})^{\frac{p^k+1}{2}}$.
If $\frac{\gamma_i}{h}$ for $i=3, 4$, is a solution of Eq.(\ref{eqgammacase2}), substituting the values of $\eta(x)$ and $\eta(x)=-\eta(x+1)$ into Eqs.(\ref{eqxetax}) and (\ref{eqxetax+1}), respectively, we get
\begin{equation}\label{eq:xandx+1}
x=-\frac{( \gamma_i/h +h)^{p^k+1}}{2\left[ (\gamma_i/h)^{p^k}h +\gamma_i\right]}\, \,\,{\rm and } \,\,\, x+1 = -\frac{( \gamma_i/h -h)^{p^k+1}}{2\left[ (\gamma_i/h)^{p^k}h +\gamma_i\right]}.
\end{equation}
Moreover, if $-\frac{\gamma_i}{h}$ for $i=3, 4$, is also a solution of Eq.(\ref{eqgammacase2}), substituting the values of $\eta(x)$ and $\eta(x)=-\eta(x+1)$ into Eqs.(\ref{eqxetax}) and (\ref{eqxetax+1}), respectively,
we get
\begin{equation}\label{eq:xandx+12}
x=\frac{( \gamma_i/h -h)^{p^k+1}}{2\left[ (\gamma_i/h)^{p^k}h +\gamma_i\right]}\, \,\,{\rm and } \,\,\, x+1 = \frac{( \gamma_i/h +h)^{p^k+1}}{2\left[ (\gamma_i/h)^{p^k}h +\gamma_i\right]}.
\end{equation}
The pairs $(x, x+1)$ in Eqs.(\ref{eq:xandx+1}) and (\ref{eq:xandx+12}) satisfying Eq.(\ref{eqpm41}) simultaneously imply that
$b=0$. This is a contradiction. So, we can assume that all possible solutions of Eq.(\ref{eqgammacase2}) in $\mathcal{U}$ are $\frac{\gamma_3}{h}$ and $\frac{\gamma_4}{h}$.
Moreover, $\left( \gamma_3 \gamma_4/h^2\right)^{\frac{p^k-1}{2}} =1$. This implies that $\left( \gamma_3 \gamma_4/h^2\right)^{2} =1$ since $\gcd(\frac{p^k-1}{2}, p^m-1)=2$.
So, $\gamma_4= \pm h^2 \gamma_3^{-1}$. A similar analysis as above implies that $\gamma_4=  h^2 \gamma_3^{-1}$.
Then $\gamma_3+h^2\gamma_3^{-1}+2h=\gamma_4+h^2\gamma_4^{-1}+2h$. This means that $\gamma_3$ and $\gamma_4$ gives the same value of $x$ since $-x^d/b$ is denoted by $\gamma_i+h^2\gamma_i^{-1}+2h$ for $i=3, 4$ and $\gcd(p^m-1,d)=1$. Hence, Eq. (\ref{eqdpk3}) has at most one solution in this case since $\gcd(p^m-1,\frac{p^k+1}{2})=1$.

Combining the above two cases, we know that for any $b\in \gf(p^m)$, Eq.(\ref{eqdpk3}) has at most one solution in $\gf(p^m)$ if $x\neq 0$ and $x\neq -1$. Obviously, $x=0$ and $x=-1$ are solutions of Eq.(\ref{eqdpk3}) for $b=1$ and $b=-1$, respectively, since $d$ is odd. In the following, we only show that there is no other solution to Eq.(\ref{eqdpk3}) than $x=0$ for the case $b=1$. The case of $b=-1$ can be similarly proved and the details are
omitted here.

Assume that $x_0$ is a solution of $(x+1)^d+x^d=1$, where $x_0\neq 0$ and $x_0\neq -1$. If $\eta(x_0)=\eta(x_0+1)$, Eq.(\ref{eqgammacase11}) becomes
\begin{equation}\label{eqgammacaxne011}
\theta^2-\frac{1}{2}\eta(x_0)\theta + h^{2}=0.
\end{equation}
It is easy to see that Eq.(\ref{eqgammacaxne011}) has only one solution $\theta=h$ or $\theta=-h$. By the definition of $\theta$, we have $\gamma^{\frac{p^k+1}{2}}=h$ or $\gamma^{\frac{p^k+1}{2}}=-h$. Since $h \in \gf(p)$, then $(\gamma^{\frac{p^k+1}{2}})^{p-1}=1$. Hence, $\gamma \in \gf(p)$ since $\gamma^{p^m-1}=1$ and $\gcd(p^m-1, \frac{(p^k+1)(p-1)}{2})=p-1$. This means that $\gamma= h$ or $\gamma=-h$. This is contradictory to
the equations in~(\ref{eqd1pk}) since $x_0\neq 0$ and $x_0\neq -1$.

If $\eta(x_0)=-\eta(x_0+1)$, then Eq.(\ref{eqgammacase21}) becomes
\begin{equation}\label{eqgammacase211}
\delta^2-\frac{1}{2}h^{\frac{p^k-3}{2}}\eta(x) \delta +1=0.
\end{equation}
It is easy to see that $h^{\frac{p^k-3}{2}}=\pm h^{-1}$ since $h \in \gf(p)$. Then we have that Eq.(\ref{eqgammacase211}) has only one solution $\delta=1$ or $\delta=-1$. By the definition of $\delta$, we have $\left( \frac{\gamma}{h}\right)^{p^k-1} =1$.  Since $\left( \frac{\gamma}{h}\right)^{p^m+1} =1$ and $\gcd(p^m+1,p^k-1)=2$, then $\gamma= h$ or $\gamma=-h$. This is contradictory to the equations in~(\ref{eqd1pk}) since $x_0\neq 0$ and $x_0\neq -1$.
The desired conclusion then follows.
\end{proof}

\begin{example}
Let $c=-1$ and $k=1$. If
$p=5,m=5,d=3645$, or $p=13,m=3,d=157$, or $p=17,m=3,d=111$,
then $F(x)=x^d$ is PcN. These results have been verified by Magma programs.
\end{example}

\begin{remark}
It is clear that \cite[Theorem 2]{Zha2020} and \cite[Theorem 4]{Zha2020} can be seen as two special cases of Theorem \ref{Theorem1} and Theorem \ref{Theorem2} for $p=3$ and $p=5$, respectively.
\end{remark}

\begin{remark}
In references~\cite{Bartoli,Hasan}, authors have showed that the monomials $x^d$ are PcN for the following exponents:
$d=p^2-p+1$, $d=p^4+(p-2)p^2+(p-1)p+1$,  $d=(p^5+1)/(p+1)$, $d=(p-1)p^6+p^5+(p-2)p^3+(p-1)p^2+p$, $d=(p-2)p^6+(p-2)p^5+(p-1)p^4+p^3+p^2+p$ and $d=(p^7+1)/(p+1)$. It is easy to show that all $d$
listed above are special solutions of $d({p^k+1})\equiv2 \pmod {p^m-1}$ for some special $k$ and $m$. Hence, our results generalizes the results about PcN monomials in~\cite{Bartoli,Hasan}.
\end{remark}

At last, we determine the $c$-differential spectrum of a class of APcN power functions.

\begin{theorem}\label{theorem2}
Let $F(x)=x^d$ be a power function over $\gf(p^m)$, where $d={p^k+1}$, $k$ is a positive integer and $p$ is an odd prime. If $c \in  \gf(p^{\gcd(m,k)})\setminus \{1\}$, then  $F(x)$ is APcN with $c$-differential spectrum
\begin{equation}\label{eqomega}
\mathbb{S}=\left\{\omega_0=\frac{p^m-1}{2},\,\,
\omega_1=1,\,\,\omega_2=\frac{p^m-1}{2}\right\}
\end{equation}
if and only if  $v_2(m)\leq v_2(k)$. If $c \notin \gf(p^{\gcd(m,k)})$, then $F(x)$ is APcN with $c$-differential spectrum
\begin{equation}\label{eqomega2}
\mathbb{S}=\left\{\omega_0=\frac{p^{m}-p^{\frac{m}{2}}}{2},\,\,\omega_1=p^{\frac{m}{2}},\,\,\omega_2=\frac{p^{m}-p^{\frac{m}{2}}}{2}\right\}
\end{equation}
if and only if $m$ is even and $k=\frac{m}{2}$.
\end{theorem}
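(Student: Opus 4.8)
The plan is to determine, for a fixed nonzero $a$ (which I take to be $a=1$, since by the reduction preceding Lemma~\ref{lemma1} the differential spectrum of a power function is independent of the choice of $a\in\gf(p^m)^*$), the distribution over $b\in\gf(p^m)$ of
$$N(b)=\#\{x\in\gf(p^m): (x+1)^{p^k+1}-cx^{p^k+1}=b\}.$$
Expanding $(x+1)^{p^k+1}=(x^{p^k}+1)(x+1)=x^{p^k+1}+x^{p^k}+x+1$, the defining equation reads $(1-c)x^{p^k+1}+x^{p^k}+x+1=b$. Since $c\neq 1$, I set $\beta=\tfrac{1}{1-c}$ and substitute $x=y-\beta$; a direct computation (using $(1-c)\beta=1$ and $p^k+1$ even) cancels the $y^{p^k}$-term and reduces the equation to
$$y^{p^k+1}+Ay=C,\qquad A=\beta-\beta^{p^k},\quad C=\beta(b-1+\beta).$$
As $b$ runs through $\gf(p^m)$ the parameter $C$ runs through $\gf(p^m)$ bijectively, so it suffices to count the solutions $y$ of $y^{p^k+1}+Ay=C$ as $C$ varies.

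For the first assertion, $c\in\gf(p^{\gcd(m,k)})\setminus\{1\}$ forces $\beta^{p^k}=\beta$, so $A=0$ and the equation collapses to $y^{p^k+1}=C$. I then count $(p^k+1)$-th powers directly: $C=0$ gives the unique $y=0$, while for $C\neq 0$ the number of solutions is $g:=\gcd(p^k+1,p^m-1)$ if $C$ lies in the index-$g$ subgroup of $(p^k+1)$-th powers and $0$ otherwise. By Lemma~\ref{lemma5}, $g=2$ exactly when $v_2(m)\leq v_2(k)$ and $g=p^{\gcd(m,k)}+1\geq p+1>2$ otherwise; hence $\max_b N(b)=2$ (so $F$ is APcN) precisely in the former case, and counting the $\tfrac{p^m-1}{2}$ square and $\tfrac{p^m-1}{2}$ nonsquare values of $C$ together with the single $C=0$ yields $\omega_0=\omega_2=\tfrac{p^m-1}{2}$, $\omega_1=1$, which is (\ref{eqomega}).

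For the second assertion, $c\notin\gf(p^{\gcd(m,k)})$ gives $A\neq 0$, and the fibre $C=0$ must be handled separately: $y^{p^k+1}+Ay=0$ factors as $y(y^{p^k}+A)=0$ and has exactly the two solutions $y=0$ and the unique $y_0$ with $y_0^{p^k}=-A$. For $C\neq 0$ I normalize to Bluher's trinomial by the scaling $y=(C/A)z$, which transforms $y^{p^k+1}+Ay=C$ into $z^{p^k+1}-Bz+B=0$ with $B=-A^{p^k+1}/C^{p^k}\in\gf(p^m)^*$; since $C\mapsto B$ is a bijection of $\gf(p^m)^*$, Lemma~\ref{lemma4} supplies the whole distribution. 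Writing $Q=p^{\gcd(m,k)}$ and $h=m/\gcd(m,k)$, the possible root counts are $0,1,2,Q+1$, so $F$ is APcN (for $a\neq 0$) if and only if $N_{Q+1}=0$; as $p$ is odd, $N_{Q+1}=\tfrac{Q^{h-1}-Q}{Q^2-1}$ vanishes exactly when $h=2$, i.e. $m$ even and $k=\tfrac m2$.

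Finally, specializing Lemma~\ref{lemma4}(1) to $h=2$ gives $N_0=\tfrac{Q(Q-1)}{2}$, $N_1=Q$, $N_2=\tfrac{(Q-2)(Q+1)}{2}$, $N_{Q+1}=0$; adding the lone $C=0$ fibre (which carries $2$ solutions) to $N_2$ and using $Q=p^{m/2}$, $Q^2=p^m$, I obtain $\omega_0=N_0=\tfrac{p^m-p^{m/2}}{2}$, $\omega_1=N_1=p^{m/2}$, and $\omega_2=N_2+1=\tfrac{p^m-p^{m/2}}{2}$, which is exactly (\ref{eqomega2}). I expect the main obstacle to be Case~B: choosing the scaling that puts $y^{p^k+1}+Ay=C$ into Bluher's normalized form, carefully tracking the exceptional fibre $C=0$ (which is what shifts $N_2$ into $\omega_2$ and leaves the striking value $\omega_1=p^{m/2}$), and reading off from Lemma~\ref{lemma4} that $N_{Q+1}=0\Leftrightarrow h=2$; the surrounding algebra is routine.
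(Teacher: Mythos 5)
Your proposal is correct and follows essentially the same route as the paper's proof: the same shift $x=y-\beta$ (the paper's $a=\tfrac{1}{1-c}$) reducing ${}_c\Delta_F(1,b)$ to counting roots of $y^{p^k+1}+Ay=C$, the same dichotomy $A=0$ versus $A\neq 0$ according to whether $c\in\gf(p^{\gcd(m,k)})$, Lemma~\ref{lemma5} in the first case and Bluher's count (Lemma~\ref{lemma4}) after the scaling $y=(C/A)z$ in the second, with the fibre $C=0$ handled separately. If anything, your write-up is slightly more explicit than the paper's, in fixing the sign of $B$ to match Bluher's normal form exactly and in carrying out the $h=2$ computation $\omega_2=N_2+1=\tfrac{p^m-p^{m/2}}{2}$ that the paper leaves implicit.
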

\begin{proof} If $c=1$, the $c$-differential uniformity of $F(x)=x^{p^k+1}$ was thoroughly analyzed in \cite{Dembowski68,Gold68}. In the following, we always assume that $c\neq 1$, and investigate solutions of
$\Delta(x)=b$ for $b \in \gf(p^m)$, where
\begin{equation*}
\begin{split}
\Delta(x)&=(x+1)^{p^k+1}-cx^{p^k+1}=(1-c)x^{p^k+1}+x^{p^k}+x+1.
\end{split}
\end{equation*}
Let $a=\frac{1}{1-c}$. The equation $\Delta(x)=b$ is equivalent to
\begin{equation}\label{eq:dasf}
x^{p^k+1}+a x^{p^k}+a x+ a (1-b)=0.
\end{equation}
Let $x=y-a$, then Eq.(\ref{eq:dasf}) becomes
\begin{equation}\label{eq:sdds}
(x-a)^{p^k+1}+a (x-a)^{p^k}+a (x-a)+ a (1-b)=x^{p^k+1}+(a-a^{p^k})x-a^2+a-a b=0.
\end{equation}
It is clear that
\begin{equation}\label{eq:sd1ds}
a=a^{p^k}\Longleftrightarrow \frac{1}{1-c}=\left(\frac{1}{1-c}\right)^{p^k}\Longleftrightarrow c=c^{p^k}\Longleftrightarrow c^{p^k-1}=1
\end{equation} for any $c\neq 1$. Then $a-a^{p^k}=0$ if and only if $c \in  \gf(p^{\gcd(m,k)})\setminus \{1\}$. The proof can be done in the following two cases.

\noindent {\bf Case 1:} $c \in  \gf(p^{\gcd(m,k)})\setminus \{1\}$. In this case,  $a-a^{p^k}=0$. Since \cite[Theorem 3]{Mesnager2020} have proved that $F(x)$ is APcN if $v_2(m)\leq v_2(k)$, we here only prove the $c$-differential spectrum of $F(x)$.

Since $a-a^{p^k}=0$, Eq.(\ref{eq:sdds}) becomes
 \begin{equation}\label{eq:sdds1}
x^{p^k+1}=a^2-a+a b.
\end{equation}
From Lemma \ref{lemma5}, we have $\gcd(p^m-1,p^k+1)=2$ since $v_2(m)\leq v_2(k)$. Then Eq.(\ref{eq:sdds1})
 has no solution, or one solution, or two solutions if $a^2-a+a b$ is a non-square element, or zero, or a square element in $\gf(p^m)$, respectively. Hence, we can obtain the $c$-differential spectrum of $F(x)$, which is given in (\ref{eqomega}).

If $F(x)$ is APcN, then Eq. (\ref{eq:sdds1}) has at most two solutions. From Lemma \ref{lemma5}, we have $v_2(m)\leq v_2(k)$.
Hence, then $F(x)$ is APcN if and only if  $v_2(m)\leq v_2(k)$.

\noindent {\bf Case 2:} $c \notin \gf(p^{\gcd(m,k)})$. In this case, we have $a-a^{p^k}\neq 0$ from (\ref{eq:sd1ds}). If $a^2-a+a b=0$, then Eq.(\ref{eq:sdds}) can be rewritten as
 $$(x^{p^k}+(a-a^{p^k}))x=0.$$ It is clear that
 $x_1=0$ and $x_2=a-a^{p^{m-k}}$ are the solutions of  the above equation. If $a^2-a+a b\neq0$, by a simple substitution of variable $x$ with $\frac{a^2+a-a b}{a-a^k}x$ and dividing $(\frac{a^2-a+a b}{a-a^k})^{p^k+1}$, then Eq. (\ref{eq:sdds}) becomes
\begin{equation*}
x^{p^k+1}+Bx-B=0,
\end{equation*}
where $B=\frac{(a-a^{p^{k}})^{p^k+1}}{(a^2-a+a b)^{p^k}}$. Obviously, $B$ runs over $\gf(p^m)^*$ if $b$ runs over $\gf(p^m)\setminus (1-a)$. From Lemma \ref{lemma4}, we know that $F(x)$ is APcN if and only if $m$ is even and $k=\frac{m}{2}$. And $F(x)$ has the $c$-differential spectrum given in (\ref{eqomega2}).
The desired conclusion then follows.
\end{proof}

\begin{remark}
When $k=0$, then $x^{p^k+1}$ becomes $x^2$. Ellingsen et al. in \cite{Ellingsen2020} proved that this function is APcN over $\gf(p^m)$ for any $c\neq 1$. It is very easy to see that the $c$-differential spectrum of $x^2$ is the given in~(\ref{eqomega}).
\end{remark}

\section{Conclusions }\label{sec:concluding}
Recently, Ellingsen et al. in \cite{Ellingsen2020} proposed a new concept called multiplicative differential, and the corresponding $c$-differential
uniformity. Then some functions with low $c$-differential uniformity have been constructed. This paper continued the research in~\cite{Bartoli,Mesnager2020,Hasan,Yan2020,Zha2020},
and mainly focused on the constructions of PcN power functions. Briefly, a necessary and sufficient condition for the Gold function being PcN was given. According to numerical experiment,
we proposed a conjecture about the possible values of $d$ for $x^d$ to be PcN over $\gf(2^m)$, where $c\in \gf(2^m)$. Second, we proved that the monomial $x^d$ over $\gf(p^m)$ was PcN, where $c=-1$ and $d$ satisfies $d({p^k+1})\equiv2 \pmod {p^m-1}$. Our theorems generalized some results on PcN power functions in \cite{Bartoli,Hasan,Zha2020}. At last, the $c$-differential spectrum of a class of APcN power functions was obtained.

\begin {thebibliography}{100}

\bibitem{Bartoliar} D. Bartoli, M. Calderini, On construction and (non)existence of c-(almost) perfect
nonlinear functions, Finite Fields Appl. 72 (2021) 101835.1-16.

\bibitem{Bartoli} D. Bartoli, M. Timpanella, On a generalization of planar functions, J.
Algebr. Comb. 52 (2020) 187-213.

\bibitem{Blondeau2010} C. Blondeau, A. Canteaut, P. Charpin, Differential properties of power functions, Int. J. Inf. Coding Theory, 1(2) (2010)
149-170.

\bibitem{Bluher2020} A. W. Bluher, On $x^{q+1}+ax+b$, Finite Fileds Appl. 10 (2004) 285-305.

\bibitem{Borisov2002} N. Borisov, M. Chew, R. Johnson, D. Wagner, Multiplicative Differentials, In: Daemen J., Rijmen V. (eds) Fast Software Encryption. FSE 2002. LNCS
2365, Springer, Berlin, Heidelberg, (2002) 17-33.

\bibitem{Coulter97} R. Coulter, R. Matthews, Planar functions and planes of Lenz-Barlotti class II, Des. Codes Cryptogr. 10(2) (1997) 167-184.

\bibitem{Dembowski68}
P. Dembowski, T. G. Ostrom, Planes of order $n$ with collineation groups of order $n^2$, Math. Z. 103
(1968) 239-258.

\bibitem{Ding06} C. Ding, J. Yuan, A family of skew Hadamard difference sets, J. Combin. Theory Ser. A 113 (2006) 1526-1535.

\bibitem{Dobbertin99}
H. Dobbertin, Almost perfect nonlinear power functions on $\gf(2^n)$: The Welch case, IEEE Trans. Inf. Theory, 45(4) (1999) 1271-1275.

\bibitem{Dobbertin991} H. Dobbertin, Almost perfect nonlinear power functions on $\gf(2^n)$: The Niho case, Inform. Comput. 151(1-2) (1999) 57-72.

\bibitem{Ellingsen2020} P. Ellingsen, P. Felke, C. Riera, P. St$\breve{a}$nic$\breve{a}$, A. Tkachenko, $C$-differentials,
multiplicative uniformity and (almost) perfect $c$-nonlinearity, IEEE Trans. Inf. Theory, 66(9) (2020) 5781-5789.

\bibitem{Gold68} R. Gold, Maximal recursive sequences with $3$-valued recursive cross-correlation functions, IEEE Trans. Inf. Theory, 14(1) (1968) 154-156.

\bibitem{Mesnager2020} S. Mesnager, C. Riera, P. St$\breve{a}$nic$\breve{a}$, H. Yan, Z. Zhou, Investigations on $c$-(almost) perfect nonlinear functions, arXiv:2010.10023v2.

\bibitem{Nyberg1994} K.\ Nyberg, Differnetially uniform mappings for cryptography, In: T.\ Helleseth (ed.) EUROCRYPT 1993, LNCS, vol. 765, pp. 55-64. Springer, Heidelberg, 1994.

\bibitem{Riera991} C. Riera, P. St$\breve{a}$nic$\breve{a}$, Investigations on $c$-(almost) perfect nonlinear functions, arXiv:2004.02245v2.

\bibitem{Hasan} S. U. Hasan, M. Pal, C. Riera, P. St$\breve{a}$nic$\breve{a}$, On the $c$-differential uniformity
of certain maps over finite fields,  Des. Codes Cryptogr. 89 (2021) 221-239.

\bibitem{St2020} P. St$\breve{a}$nic$\breve{a}$, C. Riera, A. Tkachenko, Characters, Weil sums and $c$-differential uniformity with an application to the perturbed Gold function, arXiv:2009.07779v1.

\bibitem{Yanar} H. Yan, On -1-differential uniformity of ternary APN power functions, arXiv:2101.10543v1.

\bibitem{Yan2020} H. Yan, S. Mesnager, Z. Zhou, Power functions over finite fields with low $c$-differential
uniformity, arXiv:2003.13019v3.

\bibitem{Wuar} Y. Wu, N. Li, X. Zeng, New PcN and APcN functions over finite fields, arXiv:2010.05396v1.

\bibitem{zha091}
Z. Zha, X. Wang, New families of perfect nonlinear polynomial functions, J. Algebra 322
(2009) 3912-3918.

\bibitem{zha09} Z. Zha, G. Kyureghyan, X. Wang, Perfect nonlinear binomials and their semifields, Finite Fields
Appl. 15 (2009) 125-133.

\bibitem{Zha2020} Z. Zha, L. Hu, Some classes of power functions with low $c$-differential uniformity over
finite fields, Des. Codes Cryptogr. https://doi.org/10.1007/s10623-021-00866-8.

\noindent
\end {thebibliography}
\end{document}